\patchcmd{\maketitle}{\@copyrightpermission}{
  \begin{minipage}{0.3\columnwidth}
    \href{http://creativecommons.org/licenses/by/4.0/}{\includegraphics[width=0.90\textwidth]{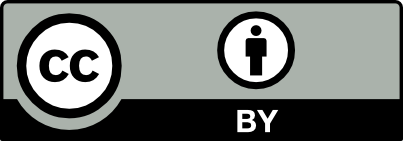}}
  \end{minipage}\hfill
  \begin{minipage}{0.7\columnwidth}
    \href{http://creativecommons.org/licenses/by/4.0/}{This work is licensed under a Creative Commons Attribution International 4.0 License.}
    \end{minipage}
    
   \vspace{5pt}
}{}{}
\definecolor{todo}{RGB}{200, 0, 0}
\definecolor{nk}{RGB}{0, 100, 0}
\definecolor{ho}{RGB}{0, 50, 150}
\acrodef{IR}{Information Retrieval}
\acrodef{LTR}{Learning to Rank}
\acrodef{ARP}{Average Relevance Position}
\acrodef{DCG}{Discounted Cumulative Gain}
\acrodef{EM}{Expectation Maximization}
\acrodef{RecSys}{Recommender Systems}
\acrodef{CTR}{Click-Through Rate}
\acrodef{IPS}{Inverse Propensity Scoring}
\theoremstyle{definition}
\author{Norman Knyazev}
\affiliation{%
	\institution{Radboud University}
	\city{Nijmegen}
	\country{The Netherlands}
}
\email{norman.knyazev@ru.nl}
\author{Harrie Oosterhuis}
\affiliation{%
	\institution{Radboud University}
	\city{Nijmegen}
	\country{The Netherlands}
}
\email{harrie.oosterhuis@ru.nl}
\title{The Bandwagon Effect: Not Just Another Bias} %
\begin{document}

\begin{abstract}
Optimizing recommender systems based on user interaction data is mainly seen as a problem of dealing with selection bias, where most existing work assumes that interactions from different users are independent.
However, it has been shown that in reality user feedback is often influenced by earlier interactions of other users, e.g.\  via average ratings, number of views or sales per item, etc.
This phenomenon is known as the \emph{bandwagon effect}.

In contrast with previous literature, we argue that the bandwagon effect should not be seen as a problem of statistical bias. In fact, we prove that this effect leaves both individual interactions and their sample mean unbiased.
Nevertheless, we show that it can make estimators inconsistent, introducing a distinct set of problems for convergence in relevance estimation.
Our theoretical analysis investigates the conditions under which the bandwagon effect poses a consistency problem and explores several approaches for mitigating these issues.
This work aims to show that the bandwagon effect poses an underinvestigated open problem that is fundamentally distinct from the well-studied selection bias in recommendation.
\end{abstract}

\begin{CCSXML}
<ccs2012>
<concept>
<concept_id>10002951.10003260.10003261.10003270</concept_id>
<concept_desc>Information systems~Social recommendation</concept_desc>
<concept_significance>500</concept_significance>
</concept>
<concept>
<concept_id>10002950.10003648.10003688.10003693</concept_id>
<concept_desc>Mathematics of computing~Time series analysis</concept_desc>
<concept_significance>500</concept_significance>
</concept>
<concept>
<concept_id>10002951.10003317.10003338.10003343</concept_id>
<concept_desc>Information systems~Learning to rank</concept_desc>
<concept_significance>500</concept_significance>
</concept>
</ccs2012>
\end{CCSXML}

\ccsdesc[500]{Information systems~Social recommendation}
\ccsdesc[500]{Mathematics of computing~Time series analysis}
\ccsdesc[500]{Information systems~Learning to rank}

\keywords{Bandwagon Effect, Consistency, Statistical Bias, Recommendation}

\maketitle

\acresetall

\begin{figure*}[tb]
\centering
\includegraphics[width=1.72\columnwidth]{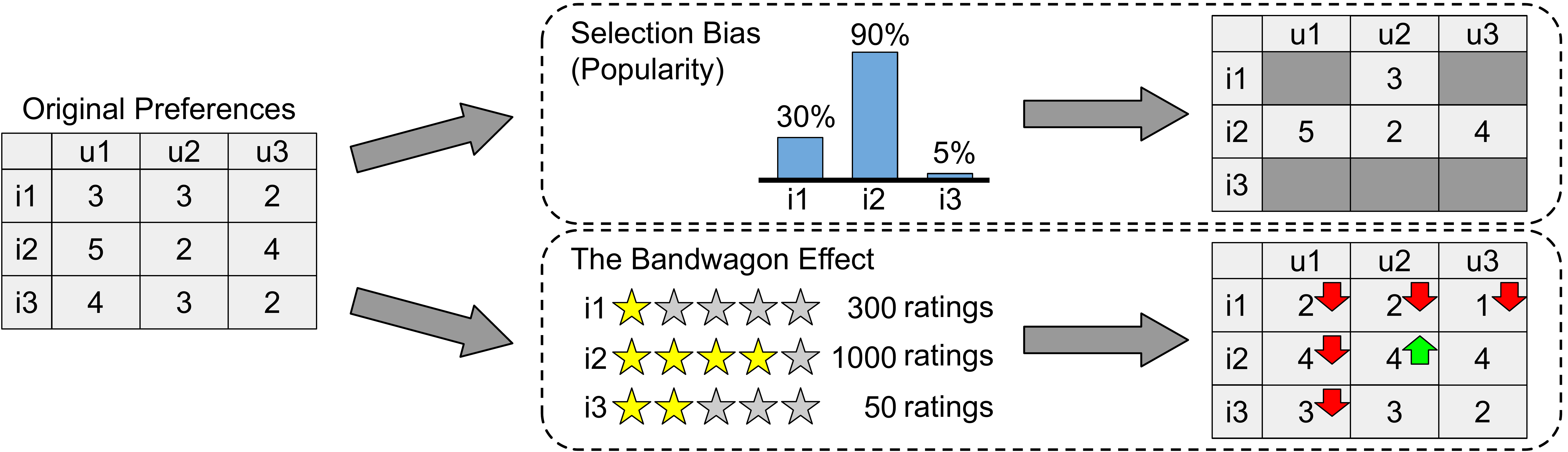}

\vspace{0.25\baselineskip}
\caption{
Impact of selection bias and bandwagon effect on average rating estimation.
Under popularity bias, users are less likely to interact with unpopular items, leaving most of their ratings unobserved.
Under the bandwagon effect, new ratings are skewed in the direction of previous users' opinion, with larger group size eliciting a stronger effect.
}
\label{fig:selection_bias_bandwagon_example} %
\end{figure*}

\section{Introduction}

One of the main challenges in optimizing \ac{RecSys} is correctly capturing user preferences from historical interaction data ~\citep{ricci2015recommender, hu2008collaborative, su2009survey,ilievski2013personalized, jannach2018recommending}.
A widely adopted approach is to train a model to predict known ratings and then apply it to previously unobserved user-item pairs ~\citep{pradel2012ranking, steck2013evaluation, tang2015user}. 
However, in the last decade it has become apparent that observed interactions are often a very skewed representation of the actual user preferences~\citep{steck2011item, pradel2012ranking, canamares2018should}.
In particular, ratings are known to be affected by various forms of \emph{selection} bias, such as \emph{popularity} bias \citep{steck2011item}, \emph{positivity} bias \citep{pradel2012ranking} or \emph{position} bias \citep{oosterhuis2020topkrankings}. 
Popularity bias, for instance, occurs because as most interactions involve only a small number of very popular items, the patterns in observed ratings may not necessarily extrapolate to the data as a whole~\citep{canamares2018should, steck2011item, pradel2012ranking, Schnabel2016}.
It is now well understood that ignoring statistical biases during optimization is likely to lead to suboptimal recommendation performance~\citep{huang2020keeping, Schnabel2016, canamares2018should} 
and numerous solutions have been proposed to address various forms of selection bias in recommendation data~\citep{Schnabel2016, huang2020keeping, saito2020unbiased,wang2019doubly, saito2020doubly, saito2020unbiased, yuan2020unbiased,satoCausalityAwareNeighborhoodMethods2021,jeunen2021pessimistic}.

In this work, we will investigate the related but fundamentally distinct \emph{bandwagon effect},
also referred to as \emph{herding} or \emph{conformity bias}.
Despite being observed by multiple studies, this often-overlooked effect occurs because users tend to follow the behavior and opinions of others~\citep{simon1954Bandwagon,flanagin2013Trusting}.
That is, if a user is told that many other users have liked an item, they themselves become more likely to give a positive rating. %
Conversely, when a user is told that many other users dislike an item, their own rating is likely to become lower. 
Several studies have observed this behavior when exposing the average rating from previous users~\citep{flanagin2013Trusting,sundar2008Bandwagon,cosley2003Seeing,adomavicius2016Understanding} and analogous trends have also been reported for implicit feedback %
~\citep{qin2021Bootstrapping,hanson1996Hits}.
Consequently, the bandwagon effect can pose a serious problem when trying to infer user preferences from interaction data.

Despite its well-known existence, there have only been limited attempts at reducing the influence of the bandwagon effect~\citep{wang2014Quantifying,zhang2017Modeling,liu2016Are} and few authors have theoretically analyzed it in statistical terms~\citep{xie2020Robust,xie2021Understanding}. 
Our work will build on a mathematical model proposed by Xie \textit{et al.}~\citep{xie2020Robust,xie2021Understanding}, who in line with earlier literature~\citep{wang2014Quantifying,zhang2017Modeling} interpret the bandwagon effect as a problem of statistical bias. %

In this work, we provide a rigorous theoretical analysis of the bandwagon effect and its statistical impact on rating 
and \ac{CTR} %
estimation.
In stark contrast with previous work, we prove that a simple form of bandwagon effect of \citeauthor{xie2021Understanding} does \emph{not} introduce statistical bias, as both individual ratings and their sample mean yield unbiased estimates under the effect.
Nonetheless, our analysis also reveals that 
the bandwagon effect can create serious problems for convergence by making the sample mean inconsistent.
Therefore, despite not being a bias issue, the bandwagon effect can still introduce significant errors during optimization. 
To understand when this may pose a problem, we theoretically prove a condition under which inconsistency is guaranteed to arise for the sample mean.
Furthermore, our theoretical and simulated results show that even when an estimator is consistent, the convergence can still be so slow that finding the true user preferences becomes practically infeasible.
We therefore investigate whether existing debiasing methods can be adapted to handle these convergence issues.
However, as none of our methods are able to combine theoretical guarantees with
robust empirical performance, we conclude that the bandwagon 
effect currently remains an open problem that needs more attention in future work.

\section{Related Work}

\subsection{Debiasing Interactions for Recommendation}

Learning from user interactions has a long tradition in the field of recommender systems~\citep{koren2009matrix, pradel2012ranking, Schnabel2016, hernandez2014probabilistic, marlin2009collaborative}.
A common approach is to cast the recommendation problem as a rating prediction task where the goal is to predict the ratings a user would give to items~\citep{koren2009matrix, marlin2009collaborative, hernandez2014probabilistic}.
However, one should not straightforwardly base the optimization on the ratings observed in interaction logs as they are often not representative of actual user preferences due to selection bias~\citep{hernandez2014probabilistic, marlin2009collaborative}.
The most well-known and well-studied form of bias in rating data is popularity bias: a small selection of popular items receives the large majority of user ratings and is thus grossly overrepresented~\citep{canamares2018should, pradel2012ranking, steck2011item}.
Positivity bias is also prevalent: users may be more likely to leave a rating on items they like; consequently, low ratings become underrepresented in the resulting interaction data~\citep{pradel2012ranking}.
As seen in Figure~\ref{fig:selection_bias_bandwagon_example} (top), 
if popular items that attract the majority of ratings also happen to be more liked by users (i.e.\ receive higher ratings), the average rating estimate from observed ratings becomes skewed towards a higher value.
\citet{Schnabel2016} showed that not accounting for selection bias makes the optimization procedure focus on statistically overrepresented user-item combinations.
As a solution, they propose to use \acf{IPS} to unbiasedly estimate and optimize the performance of a system~\citep{little2019statistical}.
Debiasing for recommendation has become a very popular topic in recent years:
for example,
\citet{wang2019doubly} propose a doubly robust estimator that combines \ac{IPS} with regression estimates;
\citet{saito2020unbiased} adapted the \ac{IPS} matrix factorization approach to debias click data instead of ratings;
and
\citet{jeunen2021pessimistic} incorporate a prior distribution to avoid overestimation during prediction.

Selection bias is also widespread in clicks. Its most well-studied form is position bias, which occurs when the position at which an item is displayed heavily impacts the attention the item will receive, and thus how many users will click on it~\citep{craswell2008experimental}.
Similar to ratings, \ac{IPS} is a popular method to correct for selection bias in clicks~\citep{joachims2017unbiased, oosterhuis-phd-thesis-2020}.

The problem of selection bias in explicit feedback, i.e.\ ratings, and implicit feedback, e.g.\ clicks, is thus well recognized and has been addressed extensively by the existing literature~\citep{joachims2017unbiased, oosterhuis-phd-thesis-2020, jeunen2021pessimistic, saito2020unbiased, wang2019doubly, Schnabel2016, koren2009matrix, pradel2012ranking, hernandez2014probabilistic, marlin2009collaborative}.
This can lead to the impression that learning from interactions is mainly a selection bias problem. 
Nevertheless, our current work on the bandwagon effect will show that this is definitely not the case.

\subsection{The Bandwagon Effect}
\label{sec:relatedwork:bandwagon}

The \emph{bandwagon effect}, or \emph{herding}, describes a phenomenon whereby individuals tend to exhibit a greater affinity towards an item when they believe other individuals also like it~\citep{simon1954Bandwagon,raafat2009Herding}.
The decisions of earlier individuals can propagate through \emph{information cascades} to individuals also taking the same decision later, leading to snowballing effects, as well documented in social psychology, economics and finance, and, more recently, web information systems~\citep{raafat2009Herding,caparrelli2004Herding,borghesi2010Spatial,cosley2003Seeing}.

Online platforms frequently include various social cues next to items to assist users with their choice, e.g.\ the number and the average value of previous user ratings. %
Numerous studies have shown that users actively leverage such information when making purchasing or consumption decisions~\citep{hanson1996Hits,dholakia2001Coveted,sundar2008Bandwagon,flanagin2013Trusting,qin2021Bootstrapping,salganik2006Experimental,salganik2008Leading,fu2011Aggregate,kim2016We,huang2006Herding,adomavicius2016Understanding}. 
\citet{hanson1996Hits} found that artificially inflating download counts for some of the products on a software download service led to those products receiving more subsequent downloads than their unaltered counterparts. 
Similarly, \citet{qin2021Bootstrapping} report that Chrome Web Store users rarely click on items displayed with a low download count.
\citet{flanagin2013Trusting} noted 
that users exposed to the average rating also rated movies closer to that average.
Analogously, \citet{sundar2008Bandwagon} found that participants reported higher purchasing intent for high-rated items over low-rated items and that this effect arose from an improved perception of quality and value. 
Both studies also reported that the strength of the effect was positively correlated with the item's number of ratings, consistent with earlier works associating larger group size with stronger herding~\citep{latane1981Social,asch1951Effects,huang2006Herding}.
Altogether, these observations suggest that users incorporate the experience of others into their own judgements and the strength of this effect is correlated with the size of the group.

This phenomenon can have a number of negative consequences.
On an individual level, the bandwagon effect can be so strong that users may act against their own interests: \citet{dholakia2001Coveted} highlight that eBay users frequently bid on items that already have many bids, whilst ignoring comparable or superior listings with fewer bids.
However, another consequence of the bandwagon effect -- and the focus of this paper -- is the difficulty it brings to estimating item relevance from user interactions. 
In the pioneering experiment of \citet{salganik2006Experimental}, a large number of sequentially arriving participants browsed a randomly shuffled grid of the same $48$ songs and could interact with songs of their choice.
The participants were randomly assigned to one of nine groups, with users in groups $2-9$ additionally presented with the download count from previous users \emph{from their group} alongside each song.
The authors observed that songs had a higher probability of being downloaded in a group where they already had a high number of downloads.
In contrast, a song shown next to a low download count was less likely to get downloaded compared to the same song in groups where it had a high or no presented download count (group $1$).
Moreover, with the exception of the very best and worst songs, as determined by the control group $1$, the final ranking according to download counts varied wildly across different groups. 
In their follow-up work, \citet{salganik2008Leading} further showed that the bandwagon effect may be so strong that artificially altering the presented download count can make unpopular songs popular and vice-versa, with only the best songs eventually recovering their true relevance ranking.
Taken together, the above results suggest that under the bandwagon effect even a large amount of user feedback is not guaranteed to yield an accurate estimate of true user preference.
This is also demonstrated in Figure~\ref{fig:selection_bias_bandwagon_example} (bottom): users' true preferences are intertwined with the ratings of earlier users and this influence grows as the number of ratings increases. As a consequence, even when assuming no selection bias, true average rating cannot be determined via the sample mean.

Recently, there has been some interest in modelling herding and disentangling the effects of the crowd on the individual.
\citet{wang2014Quantifying} propose an estimator maximizing a surrogate function of the likelihood of the data and demonstrate the effectiveness of their approach for prediction of subsequent ratings. \citet{zhang2017Modeling} show that subsequent rating estimation can be further improved by considering that some users may actually choose to distance their opinion from the crowd. 
\citet{liu2016Are,zhang2021Causal,wei2021ModelAgnostic} all propose various extensions to classical recommender models by using the rating mean and/or count as additional inputs.
Finally, Xie, Zhong and colleagues formulate the bandwagon process as either a rating reweighting or hidden state prediction problem and show theoretically that the bandwagon effect slows down the convergence of the sample mean~\citep{xie2020Robust,xie2021Understanding,zhong2021Quantifying}.

Notably, most of the above works treat the bandwagon effect as a problem of statistical bias. %
Furthermore, to the best of our knowledge, only the works of \citeauthor{xie2021Understanding}~\citep{xie2020Robust,xie2021Understanding} and \citet{zhong2021Quantifying} attempt to establish a link between the bandwagon effect and statistical theory.
Our theoretical analysis will contrast with the existing literature by showing that the bandwagon effect can affect relevance estimation without introducing statistical bias.

\section{Background}

\begin{figure}[tb]
\centering
\begin{tabular}{c l}
\multirow{3}{4mm}{\rotatebox[origin=c]{90}{
Estimated Value
}}
& \includegraphics[scale=0.52]{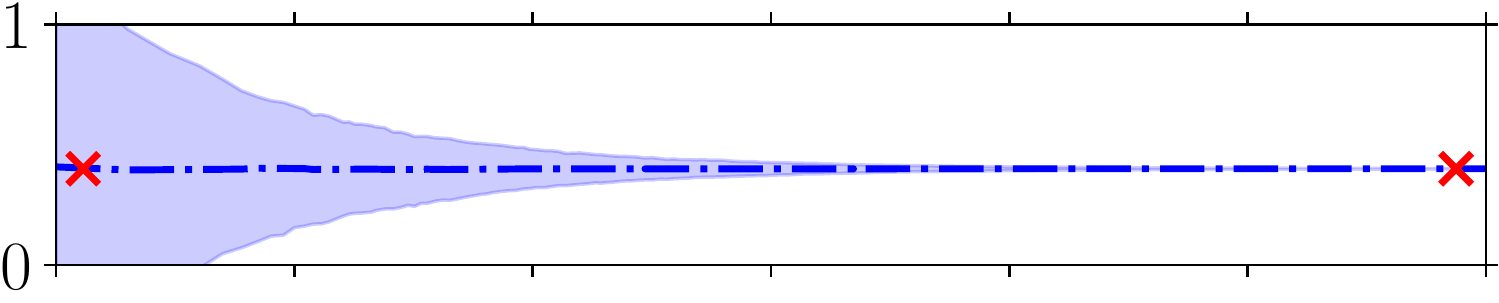}\\
& \includegraphics[scale=0.52]{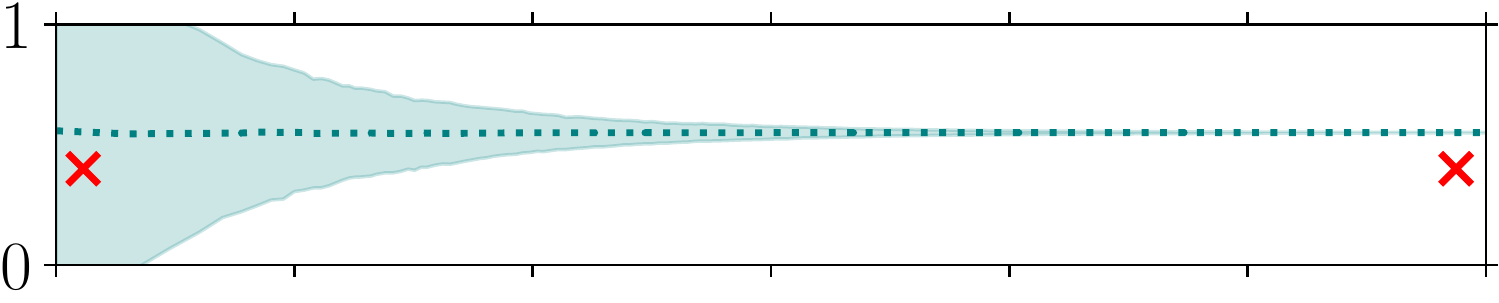}\\
& \includegraphics[scale=0.52]{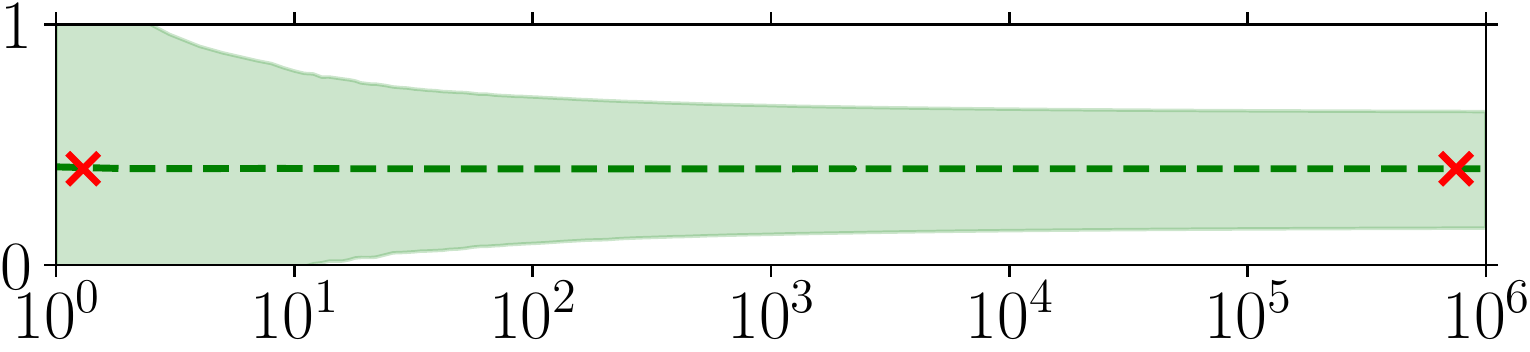}\vspace{0.75mm}
\\
\multicolumn{2}{c}{\hspace{0.5cm}
Number of Samples
}\\
\multicolumn{2}{c}{
\includegraphics[scale=0.54]{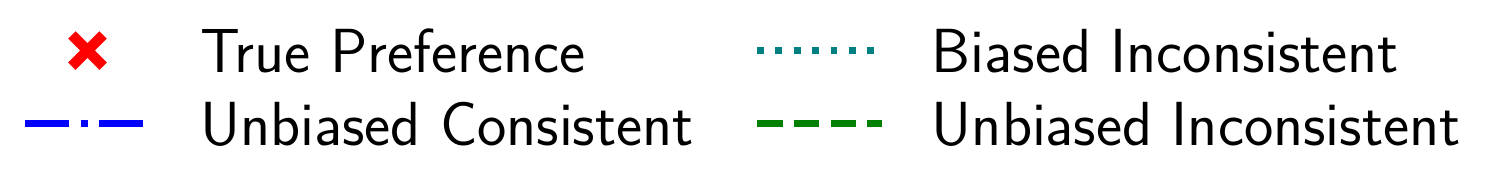}
}
\end{tabular}
\vspace{0.1\baselineskip}
\caption{Three estimators with varying bias and consistency; bold lines show the mean estimate over 1,000 independent runs, shaded areas indicate 90\% confidence intervals.
Top - unbiased and consistent: its expected value and point of convergence are both correct.
Middle - biased and inconsistent: its expected value and point of convergence are both incorrect.
Bottom - unbiased and inconsistent: its expected value is correct but its confidence interval will never converge.%
}
\label{fig:bias_consistency_example} %
\end{figure}

\subsection{Preliminaries}
\label{subsection:preliminaries}

Before we address the bandwagon effect specifically, this section will introduce our problem setting and the statistical properties that our analysis will consider.
For the sake of generality, we focus on \emph{unpersonalized} rating estimation, as we expect the issues observed for this simple task to also apply to the more involved forms of recommendation.
Furthermore, our findings are not specific to recommendation and can be 
readily %
applied to other problems involving relevance estimation such as \ac{CTR} estimation.

In our generic unpersonalized recommendation setting, the goal is to estimate the average preference of all users for each item.
We assume that this value $p \in (0,1)$ exists for each item and represents the percentage of users that truly like the item.
Furthermore, users interact with the items by assigning them binary ratings s.t.\ $r_i \in \{0,1\}$ is the $i$'th rating for an item.
While we refer to $r_i$ as a rating, it can equally well represent many other binary interactions such as \emph{clicks}, \emph{purchases}, \emph{shares}, etc.
As discussed in Section~\ref{sec:relatedwork:bandwagon}, in real systems users are often able to see how many other users have rated an item and what their average rating was, such that past ratings can affect the behaviour of the current user.
We define $p$ to be the \emph{true preference} or the percentage of users who like an item when they have no knowledge of the preferences of any other users.
Our goal is then to estimate $p$ for a \textit{single} item from the given set of observed ratings $\{r_1, r_2, r_3\dots\}$.

Let $\hat{p}_{n}$ indicate an estimate of $p$ based on $n$ observed ratings.
Our analysis will investigate three important beneficial properties that we generally desire from such an estimate.
To start, we normally want the estimate to be unbiased. In other words, in expectation it should be equal to the true value of $p$:
\begin{equation}
\text{Unbiased}\big(\hat{p}_{n}\big) \longleftrightarrow \mathbb{E}\big[\hat{p}_{n}\big] = p.
\end{equation}
Understandably, bias and unbiasedness have received a lot of attention in previous selection bias literature, as it is desirable to avoid systematic errors in rating estimation \citep{Schnabel2016,joachims2017unbiased,oosterhuis2020topkrankings,satoUnbiasedLearningCausal2020,wangCausalInferenceRecommender2020}.
However, unbiasedness on its own does not guarantee that an estimator is actually usable in practice.

The second important and often overlooked property that we may want from an estimator is consistency. An estimator is considered consistent if the estimate converges in probability to the true value of $p$ as the size of available data approaches infinity:
\begin{equation}
\text{Consistent}\big(\hat{p}_{n}\big) \longleftrightarrow \lim_{n \rightarrow \infty}\hat{p}_{n} = p.
\end{equation}
Whilst at a first glance unbiasedness and consistency may appear to refer to the same property, this is certainly \emph{not} the case.
For example, estimators can be biased yet consistent~\citep{swaminathan2015self} or unbiased and inconsistent.
As illustrated in Figure \ref{fig:bias_consistency_example}, the problem with an unbiased inconsistent estimator is that its confidence intervals do not converge to a zero error even as the size of the data increases indefinitely. 
This means its estimation errors will not disappear even after gathering more data, thus preventing accurate convergence.

Nevertheless, whilst consistency guarantees that eventually the estimate will be correct, the amount of data required for convergence may still be unattainable in practice. 
Therefore, the final property we will consider is (data) efficiency, measured by the expected squared error (equivalent to variance for unbiased estimators):
\begin{equation}
\text{Efficiency}\big(\hat{p}_{n}\big) = \mathbb{E}\big[\big(\hat{p}_{n} - p\big)^2\big].
\end{equation}
Overall, efficiency represents a more practical side of an estimator: the error we can expect it to have w.r.t.\ the true value $p$ after $n$ samples.
The effect the number of samples has on this measure can be seen as the speed of convergence.
While most existing work on learning from user interactions focuses on correcting for bias, they often also apply propensity clipping or self-normalization which introduce a little bias to greatly reduce variance~\citep{joachims2017unbiased, satoUnbiasedLearningCausal2020,yangUnbiasedOfflineRecommender2018,saito2020unbiased,swaminathan2015Counterfactuala}.

Our theoretical investigation will consider how the bandwagon effect impacts all three properties: bias, consistency and efficiency.

\subsection{A Formal Model of the Bandwagon Effect}
\label{sec:bandwagonmodel}

The mathematical model of the bandwagon effect we will use in this paper is a simplified version of that of \citeauthor{xie2021Understanding}~\citep{xie2020Robust,xie2021Understanding}.
This model represents a sequential unpersonalized recommendation scenario where users rate the item in succession and one at a time.
It assumes there are two factors that influence the probability of a positive value for the $n$'th rating ($r_n = 1$):
\begin{enumerate*} [label=(\roman*)]
\item $p$, the true proportion of users who would like the item given that they do not know anything of the other users' preferences (no bandwagon effect); and
\item  $\bar{p}_{n-1}$, the mean of the ratings that have taken place previously.
\end{enumerate*}
The rating probability is modelled as a linear combination of these two factors:
\begin{equation}
	\label{eq:process_definition}
		P(r_{n}=1) = \lambda_n p + (1-\lambda_n) \bar{p}_{n-1}, \qquad
		\bar{p}_{n-1} =  \dfrac{1}{n-1}\sum_{i=1}^{n-1} r_i.
\end{equation}
$\lambda_n \in [0,1]$ captures the degree to which users adjust their behavior to match that of the crowd.
For example, $\forall \lambda_n=1$ indicates a process with no bandwagon effect and $\forall n > 1: \lambda_n=0$ a process where true preferences play no role except for the very first user.
As previously mentioned, earlier works have shown a positive correlation with the strength of herding and the size of the group~\citep{latane1981Social,asch1951Effects,huang2006Herding}.
We capture this increase in effect by assuming that the influence of the sample mean cannot decrease as the number of ratings increases:
\begin{equation}
	\label{eq:lambda_definition}
	\lambda_1 = 1 \land \mleft( \forall n>1: 0\leq \lambda_n \leq \lambda_{n-1} \mright).
\end{equation}
In addition, this assumption states that the first rating is completely unaffected by the bandwagon effect (i.e.\ $\lambda_1 = 1$), as the first user has no information about the preferences of others. 
We have chosen the above formulation over others (e.g. \citep{wang2014Quantifying,zhang2017Modeling}) due to its amenability to statistical analysis whilst still capturing the key aspects of the bandwagon effect.%

Before we begin our theoretical analysis, we will briefly discuss two claims from earlier work by \citet{xie2020Robust} so that they can later be compared with our findings in Section~\ref{section:bandwagon_effect}.
First, \citeauthor{xie2020Robust} refer to the bandwagon effect as \emph{persuasion bias}, by which they mean that the bandwagon effect biases the rating distribution such that it no longer represents the true underlying preference.
Second, they show that as long as the lower bound of $\lambda_n$ is above $0$, the bandwagon effect will eventually be eliminated since the sample mean $\bar{p}_{n}$ is guaranteed to converge to the true parameter value:
\begin{equation}
	\inf_{n\in N^+}{\lambda_n}>0 \longrightarrow \lim_{n\to\infty} \bar{p}_n=p.
	\label{eq:earlyconsistency}
\end{equation}
Our theoretical analysis will take another critical look at the bias and consistency of  $\bar{p}_{n}$.
Surprisingly, our findings regarding bias heavily contrast with previous work on the bandwagon effect.

\section{Theoretical Analysis:\\ The Bandwagon Effect}
\label{section:bandwagon_effect}

Now that Section~\ref{sec:bandwagonmodel} has introduced our mathematical model of the bandwagon effect, we will begin our theoretical analysis of its effects on bias, consistency and efficiency.
First, we want to again consider the claim of previous work that the bandwagon effect introduces bias to the rating distribution~\citep{xie2020Robust, xie2021Understanding}.
From our bandwagon model (Equation~\ref{eq:process_definition}) it appears that the probability of a positive rating is indeed a distortion of the true preference $p$.
However, Theorem \ref{theorem:rolling_average_bias} proves that the expected values of all sample means $\bar{p}_n$ and all ratings $r_n$ are actually equal to the true preference:
\begin{equation}
	\forall n: \mathbb{E}\mleft[\bar{p}_n\mright] = \mathbb{E}\mleft[r_n\mright] = p.
\end{equation}
In other words, under our model, both the sample means and actual ratings provide \emph{unbiased} estimates of the true preference, corresponding to either top or bottom, but not center of Figure~\ref{fig:bias_consistency_example}.
This finding contrasts heavily with how previous work has approached the bandwagon effect as a bias problem~\citep{zhang2017Modeling,zhang2020Understanding,xie2020Robust,xie2021Understanding,zhong2021Quantifying,wang2014Quantifying,liu2016Are,wei2021ModelAgnostic} and similarly, how learning from user interactions is generally also framed as a bias problem~\citep{joachims2017unbiased, oosterhuis-phd-thesis-2020, jeunen2021pessimistic, saito2020unbiased, wang2019doubly, Schnabel2016, koren2009matrix, pradel2012ranking, hernandez2014probabilistic, marlin2009collaborative}.
The bandwagon effect \emph{may} be seen as an increased impact of earlier interactions on the estimate of the item's relevance. However, in contrast with selection bias, there is no predetermined rating value that is disproportionally represented for the purposes of average relevance estimation or loss minimization.

Nevertheless, the lack of bias does not mean that the bandwagon effect does not introduce other problems.
As discussed in Section~\ref{subsection:preliminaries}, there are other statistical properties besides unbiasedness that one should consider.
Theorem~\ref{theorem:sample_mean_efficiency} proves the following efficiency of the sample mean, i.e.\ its expected (squared) error:
\begin{equation*}
	\mathbb{E}\mleft[\mleft(\bar{p}_n-p\mright)^2\mright]
	=p\mleft(1-p\mright) \bigg(\dfrac{1}{n^2} + \sum_{i=1}^{n-1} 
	\dfrac{1}{i^2} \prod_{j=i+1}^{n} \dfrac{(j-1)(j+1-2\lambda_j)}{j^2}\bigg).
\end{equation*}
Consistent with our prior discussion, the above formulation reveals that when $\forall \lambda_n = 1$ 
, the variance of $\bar{p}_n$ is equivalent 
to the variance of sampling from the true preference $p$.
Conversely, when $\forall n>1: \lambda_n = 0$, the expected error does not decrease after the first interaction i.e.\ $\forall n: \mathbb{E}[(\bar{p}_n-p)^2] = \mathbb{V}[\bar{p}_1]=p(1-p)$.
Obviously, an error that never decreases provides a serious convergence problem.

To investigate whether the sample mean is consistent, we look at the value of the expected error in the limit of infinite samples.
Theorem~\ref{theorem:sample_mean_asymptotic_efficiency} proves the following expected error in the limit:
\begin{equation*}
		\lim_{n \to \infty} \mathbb{E}\mleft[\mleft(\bar{p}_n-p\mright)^2\mright]
		=
		\hspace{-0.1cm}
		\lim_{n \to \infty} p\mleft(1-p\mright) \bigg(\sum_{i=1}^{n-1}
			\dfrac{1}{i \mleft(i+1\mright)}
			\hspace{-0.1cm}
			\prod_{j=i+1}^{n-1}
				\mleft(1-\dfrac{2\lambda_j}{j+1}\mright)\bigg).
\end{equation*}
Importantly, if the expected error is not zero in the limit as in Figure~\ref{fig:bias_consistency_example} (bottom), the estimator is not consistent and vice-versa.
Based on this fact, we prove a novel necessary condition for the consistency of $\bar{p}_n$ in Theorem~\ref{theorem:sample_mean_consistency}:
		\begin{equation}
			\forall \epsilon: \lim_{n \to \infty} \text{Pr}\mleft(\mleft|\bar{p}_n-p\mright|>\epsilon\mright)=0 \longrightarrow \lim_{n \to \infty} \sum_{i=1}^n \lambda_i^2 = \infty.
		\end{equation}
In other words, if the sample mean is consistent (corresponds to Figure~\ref{fig:bias_consistency_example}, top), the sum of $\lambda_i^2$ terms is unbounded. Equivalently, convergence of $\lim_{n\to\infty}\sum_{i=1}^n \lambda_i^2$ leads to inconsistency.
We note that whilst the sufficient condition of \citet{xie2020Robust} (Equation~\ref{eq:earlyconsistency}) highlights the situations where we can guarantee consistency, it cannot be used to guarantee inconsistency. For instance, one cannot use it to show whether $\lambda_n=n^{-1}$ leads to issues with convergence as $\inf_{n\in N^+}\lambda_n=0$.
On the other hand, as $\lim_{n\to\infty}\sum_{i=1}^n i^{-2}$ is finite, Theorem~\ref{theorem:sample_mean_consistency} shows that $\bar{p}_n$ is inconsistent. Our condition can thus be seen as complementary to the sufficient condition of earlier work, qualifying the set of $\lambda$ sequences where inconsistency is certain.

We can also leverage Theorem~\ref{theorem:sample_mean_asymptotic_efficiency} to quantify the asymptotic expected absolute error for any specific $\lambda$ values.
For instance, Proposition ~\ref{proposition:rolling_average_error} proves that for $\lambda_i=c^{i-1}$ with $c \in (0,1)$:
\begin{align}
&\lim_{n\to\infty}\mathbb{E}\mleft[\mleft|\bar{p}_n-p\mright|\mright]\geq 2p\mleft(1-p\mright) \exp\big( -c\phi \mleft( c,1,2\mright) -c^{2}\phi \big( c^{2},2,2\big) \big) \nonumber \\
&\hspace{4cm}
\text{ s.t. }\phi \mleft(z,s,a\mright) \equiv \sum ^{\infty }_{k=0}\dfrac{z^{k}}{\mleft( a+k\mright)^s }. 
	\label{eq:rolling_average_error_main_text}
\end{align}
Similar bounds may also be derived for other $\lambda$ formulations.

Our analysis proved that neither the sample mean $\bar{p}_n$ nor the individual ratings $r_n$ are biased, a finding that appears to heavily contrast with earlier work that casts the bandwagon effect as a problem of bias~\citep{zhang2017Modeling,zhang2020Understanding,xie2020Robust,xie2021Understanding,zhong2021Quantifying,wang2014Quantifying,liu2016Are,wei2021ModelAgnostic}.
We speculate that the bias framing may at least in part stem from the dependency between ratings. More specifically, conditioning the expectation of $\bar{p}_n$ and $r_n$ on an earlier $\bar{p}_m$ by Theorem~\ref{theorem:rolling_average_bias} and Proposition~\ref{proposition:rolling_average_conditional_bias} leads to:
\begin{align}
	&\forall n > m > 1: 
	\mathbb{E}[\bar{p}_{n-1} \mid \bar{p}_{m}]-p =  \mleft(\bar{p}_m-p\mright)\prod_{i=m+1}^{n-1}\Big(1-\frac{\lambda_i}{i}\Big),\\
	&\forall n > m > 1: \mathbb{E}[r_{n}\mid \bar{p}_{m}]-p=\mleft(1-\lambda_n\mright)\mathbb{E}[\bar{p}_{n-1}-p \mid \bar{p}_{m}].
\end{align}
This can be seen as a \emph{conditional} bias, i.e.\ conditioned on any $\bar{p}_m$ being incorrect, all subsequent ratings and sample means are biased:
\begin{equation}
	\forall n > m \geq 1:  \mathbb{E}[\bar{p}_n \mid \bar{p}_{m} \neq p]\neq p.
\end{equation}
We speculate that previous work has interpreted this conditional bias as bias in the rating distribution and sample mean.
Nonetheless, as discussed above, our bandwagon effect model does not lead to systematic errors in relevance estimation.
In Section~\ref{section:discussion}, we will argue that therefore the bandwagon effect is best interpreted as a problem of consistency and not statistical bias.

To summarize the main findings of our theoretical analysis: 
we have proven that the bandwagon effect does not introduce bias to the rating distribution but can make the sample mean inconsistent.
We have also derived a necessary condition for consistency, expanding on prior work.
Lastly, our formulation of the expected error can quantify the asymptotic error for any specific $\lambda$ values.

\section{Bandwagon Effect Mitigation}
\subsection{Correcting for the Bandwagon Effect}
\label{subsection:correcting_bandwagon_effect}
	
We have established that while the bandwagon effect does not make ratings biased, it can lead to convergence problems.
As possible solutions, inspired by analogous approaches in click debiasing~\citep{vardasbi2020cascade, chuklin2015click}, we will propose an affine and a maximum likelihood estimator.

Firstly, similar to earlier unbiased Learning to Rank work~\citep{vardasbi2020cascade}, we notice that the 
model in Equation~\ref{eq:process_definition} is invertible.
We therefore propose an intermediate affine estimate based on a single rating:
\begin{equation}
		\hat{r}_i = \dfrac{r_i - \mleft(1-\lambda_i\mright)\bar{p}_{i-1}}{\lambda_i}.
	\label{eq:affine_derivation_1_step}
\end{equation}
Theorems~\ref{theorem:affine_single_bias} and~\ref{theorem:affine_single_conditional_bias} prove that our intermediate estimator is both unbiased and conditionally unbiased w.r.t.\ previous ratings:
\begin{equation}
\forall i > j \geq 1: \mathbb{E}[\hat{r}_i \mid \hat{r}_j] = \mathbb{E}[\hat{r}_i \mid \bar{p}_j] = \mathbb{E}[\hat{r}_i ] = p.
	\label{eq:affine_conditional_bias}
\end{equation}
Therefore, using $\hat{r}_i$ instead of $r_i$ in an averaging estimator may allow us to speed up convergence by incorporating only the information about the user's true preference without reinforcing the previous $i-1$ ratings.
With this in mind, we propose our affine estimator that uses all of the observed ratings via a weighted mean:
\begin{equation}
	\label{eq:affine_weighted_mean_definition}
	\hat{p}_n = \dfrac{1}{\sum_{i=1}^n \omega_i}\sum_{i=1}^n \omega_i \hat{r}_i, \quad \text{ s.t.\ } \forall \, 1 \leq i \leq n : \omega_i > 0.
\end{equation}
Clearly $\hat{p}_n$ is also unbiased as it is simply a weighted average over the unbiased estimates $\hat{r}_i$.
Additionally, Theorem ~\ref{theorem:affine_convergence} proves it is consistent if $\sum_{i=1}^n (\omega_i / \lambda_i)^2$ grows slower in $n$ than $(\sum_{i=1}^n\omega_i)^2$ 
, with Corollary~\ref{corollary:affine_convergence} providing specific conditions $\forall\omega_i=1$ and $\forall\omega_i=\lambda_i$.
Furthermore, Theorem~\ref{theorem:affine_general_efficiency} proves $\hat{p}_n$ has the following variance:
\begin{equation}
	\mathbb{V}\mleft[\hat{p}_n\mright]		
	= \dfrac{1}{\mleft(\sum_{i=1}^n \omega_i\mright)^2} \sum_{i=1}^n \dfrac{\omega_i^2}{\lambda_i^2} \mleft( p\mleft(1-p\mright) - \mleft(1-\lambda_i\mright)^2 \mathbb{V}\mleft[\bar{p}_{i-1}\mright]\mright).
	\label{eq:affine_mean_variance_short}
\end{equation}
Unfortunately, interpreting $\mathbb{V}\mleft[\hat{p}_n\mright]$ is not straightforward as in addition to the choice of $\omega_i$ it depends on both the rating values as well as their order.
Analogous to its use in selection bias ~\citep{joachims2017unbiased,strehl2010learning}, one can also apply clipping to the bandwagon effect, e.g. $\hat{\lambda}_i = \max{(\lambda_i, \tau)}$ s.t. $\tau>0$. However, in contrast with \ac{IPS}, $\hat{r}_n$ and $\hat{p}_n$ would remain unbiased, though not conditionally unbiased (Corollary~\ref{corollary:conditional_misestimation_affine_bias}): clipping or misestimation s.t.\ $\hat{\lambda}_i\neq\lambda_i$ may still affect the rate of convergence.

Secondly, inspired by the approach of click models~\citep{chuklin2015click}, we also propose the approximate maximum likelihood estimator $p^*_n$, obtained via iterative application of Newton's method~\citep{galantai2000Theory} to maximize the log-likelihood of the data $L$:
\begin{equation}
	\begin{split}
		L = \sum_{i=1}^n {} & {} r_i \ln\mleft(\lambda_i p^*_n + (1-\lambda_i) \bar{p}_{i-1}\mright) \\[-1.5ex]
		&+ \mleft(1-r_i\mright)\ln\mleft(1- \mleft(\lambda_i p^*_n + (1-\lambda_i) \bar{p}_{i-1}\mright)\mright).
	\end{split}
	\label{eq:maximum_likelihood_derivation}
\end{equation}
Because both the likelihood and the solution found by Newton's method appear intractable, we are unable to prove this estimator is unbiased.
Nonetheless, maximum likelihood estimators have been very effective when applied to user interactions in the past~\citep{zhuang2021cross, chuklin2015click} and can be consistent under certain feasible conditions~\citep{crowder1976Maximum}.

\subsection{Bandwagon Effect Estimation}
	\label{subsection:parameter_estimation}
	\label{sec:effectestimation}

Both our proposed estimators require accurate $\lambda$ values to capture the bandwagon effect well.
Inspired by the experiments of \citet{salganik2006Experimental}, we propose the following parameter estimation scheme that can be applied when dealing with one or more items affected by the same bandwagon effect.

We will assume that $\lambda$ values are shared across a subset of $K\geq1$ items and that all users are randomly assigned to one of $M$ bins s.t. users give $n$ ratings per item and are only presented with rating information regarding users in the same bin.
Let $p^k$ indicate the true preference of all users w.r.t.\ item $k$ (analogous to $p$) and $\hat{\lambda}_i$ an estimate of $\lambda_i$. The log-likelihood of this data is then:
\begin{equation}
	\begin{split}
		L 
		= \sum_{k=1}^K \sum_{m=1}^M\sum_{i=1}^n {}&{} r_{kmi} \ln\big(\hat{\lambda}_i p^{k} + \big(1-\hat{\lambda}_i\big)\bar{p}_{i-1}^{km}\big)
		 \\[-1.75ex]
		&+ \mleft(1-r_{kmi}\mright)\ln\big(1-\hat{\lambda}_{i} p^{k} - \big(1-\hat{\lambda}_i\big)\bar{p}_{i-1}^{km}\big).
	\end{split}
	\label{eq:parameter_estimation_1}
\end{equation}
The maximum likelihood estimate can be found by maximizing $L$, for instance through the iterative application of Newton's method. 
In order to get $\hat{\lambda}_i$ values for $i>n$, we propose fitting them to a curve, e.g.\ $\hat{\lambda}_i = a+(1-a)\cdot b^{i-1}$ where $a \in [0,1]$ and $b \in (0,1)$.
Note that as $p^k$ may be unknown, one may instead estimate $\lambda_i$ values by using its unbiased estimate $\bar{p}^k=\frac{1}{Mn}\sum_{m=1}^M\sum_{i=1}^nr_{kmi}$.

\section{Experiments and Results}

\subsection{Experimental Setup}

To investigate how the bandwagon can affect convergence and whether our proposed estimators can mitigate negative effects, we performed several experiments with simulated user interactions.

For an item with true relevance $p=0.4$, we generated user interactions under three conditions:
a \emph{no bandwagon} setting where $\forall \lambda_i=1$, a \emph{weak bandwagon} setting where $\lambda_i=0.6+0.4\cdot0.9^{i-1}$ and a \emph{strong bandwagon} setting where $\lambda_i=0.1+0.9\cdot0.95^{i-1}$. 
Our comparison includes the following four estimators:
\begin{enumerate*} [label=(\roman*)]
\item the sample mean $\bar{p}_n$ (Equation~\ref{eq:process_definition});
\item affine mean $\hat{p}_n$, which is the affine estimator (Equation~\ref{eq:affine_weighted_mean_definition}) with uniform weights $\forall i: \omega_i = 1$;
\item affine weighted $\tilde{p}_n$, which is the affine estimator but with the weights $\forall i: \omega_i = \lambda_i$; and
\item the maximum likelihood estimate $p_n^*$ (Equation~\ref{eq:maximum_likelihood_derivation}).
\end{enumerate*}
For statistical significance, the entire sampling process was repeated $1{,}000$ times independently, all our graphs include 90\% confidence bounds over these independent runs.

Furthermore, to investigate how sensitive our estimators are to misestimated $\hat{\lambda}$ values, we computed the estimator values with both the actual $\lambda$ values and estimated $\hat{\lambda}$ values.
We followed the estimation scheme of Section~\ref{sec:effectestimation}, generating additional ratings with $K=1$, $p_1=0.4$, $M=20$ and $n=100$ to represent a plausible production-level scenario~\citep{salganik2006Experimental}.
The result obtained from fitting a curve: $\hat{\lambda}_i = a+(1-a)\cdot b^{i-1}$ where $a \in [0,1]$ and $b \in (0,1)$ to the data was: $\hat{\lambda}_i=0.33+0.67\cdot0.92^{i-1}$.
We only report the results for misestimated $\hat{\lambda}$ values in the strong bandwagon setting, due to high similarity with the weak setting.

\subsection{Estimator Bias and Convergence}\label{subsec:estimator_convergence}
\begin{figure}[tb]
\centering
\begin{tabular}{c l}
\rotatebox{90}{\hspace{0.5cm}
Estimated Value
} 
& \hspace{0.1cm}\includegraphics[scale=0.52]{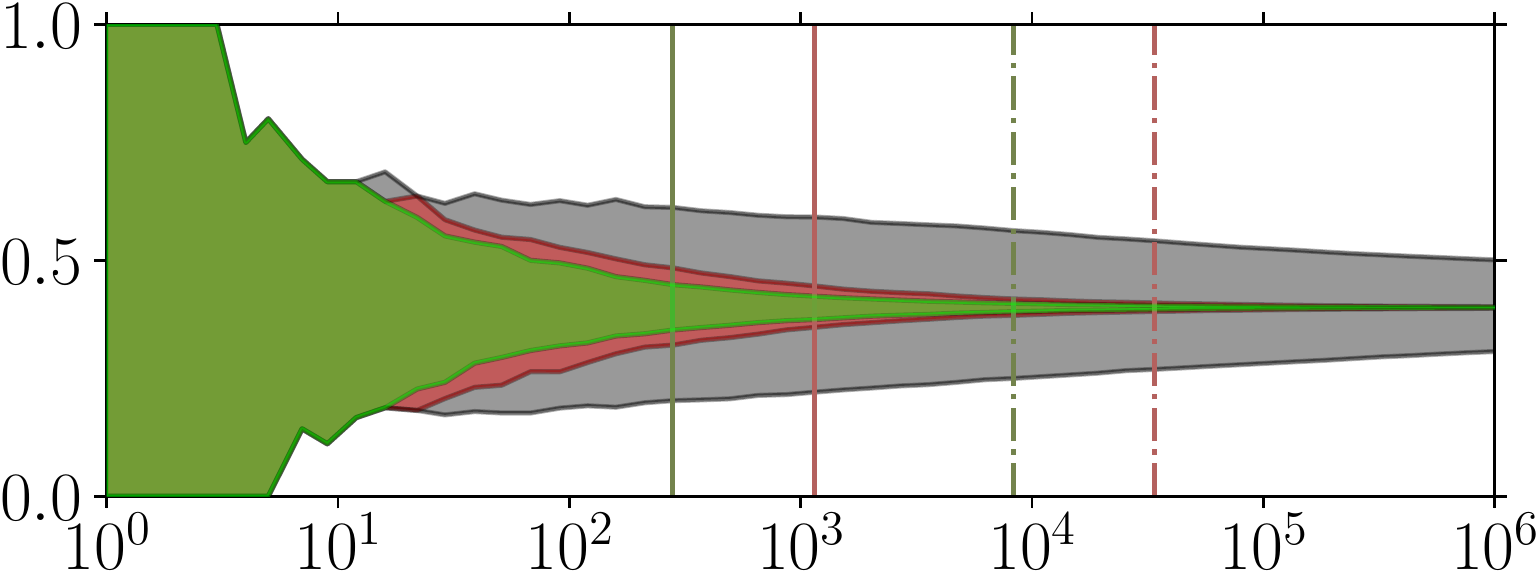}\vspace{0.75mm}\\
\multicolumn{2}{c}{\hspace{0.7cm}
Number of Samples
}\vspace{0.75mm}\\
\multicolumn{2}{c}{
\includegraphics[width=\columnwidth]{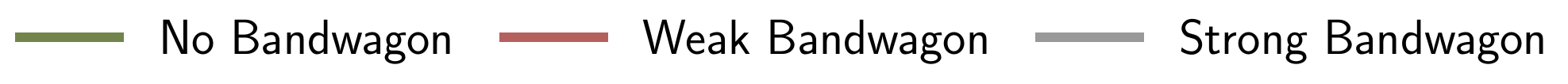}}
\end{tabular}
\vspace{0.1\baselineskip}
\caption{Convergence of $\bar{p}_n$ under various levels of bandwagon. Solid and dashed lines indicate points where a $90\%$ confidence interval over $1{,}000$ independent runs is within $0.05$ and $0.01$ of $p$ respectively, $p=0.4$.}
\label{fig:bw_level_comparison} %
\end{figure}
\begin{figure*}[tb]
\centering
\begin{tabular}{c c c c c}
\rotatebox{90}{\hspace{0.13cm}
\phantom{\emph{Misestimated}}
}
\rotatebox{90}{\hspace{0.13cm}
\emph{Weak Bandwagon}
}
\rotatebox{90}{\hspace{0.28cm}
Estimated Value
}
\raisebox{-0.7mm}{\includegraphics[scale=0.52]{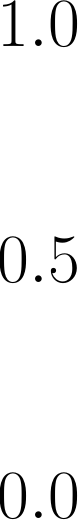}}
\includegraphics[scale=0.52]{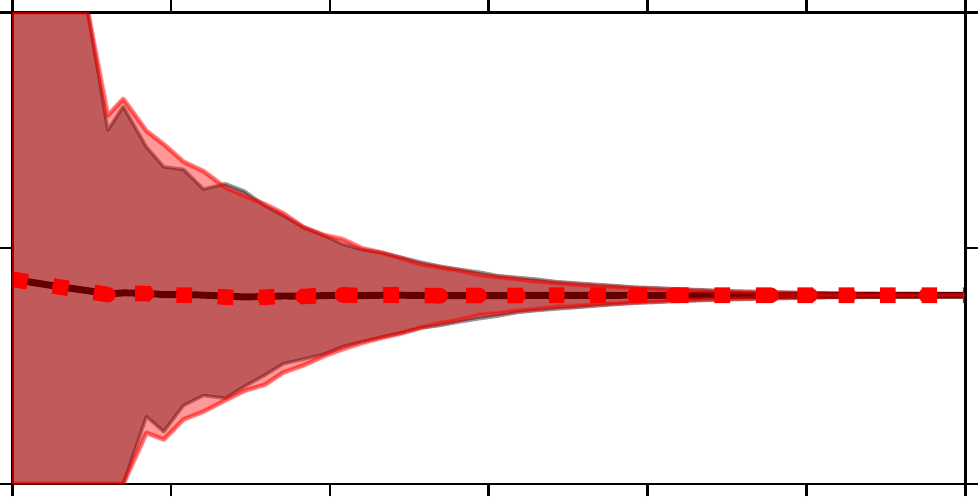} &
\includegraphics[scale=0.52]{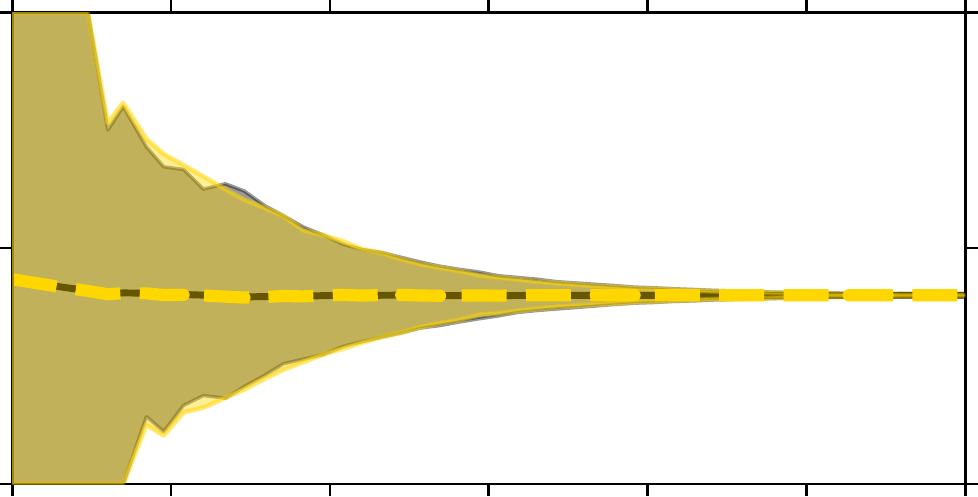} &
\includegraphics[scale=0.52]{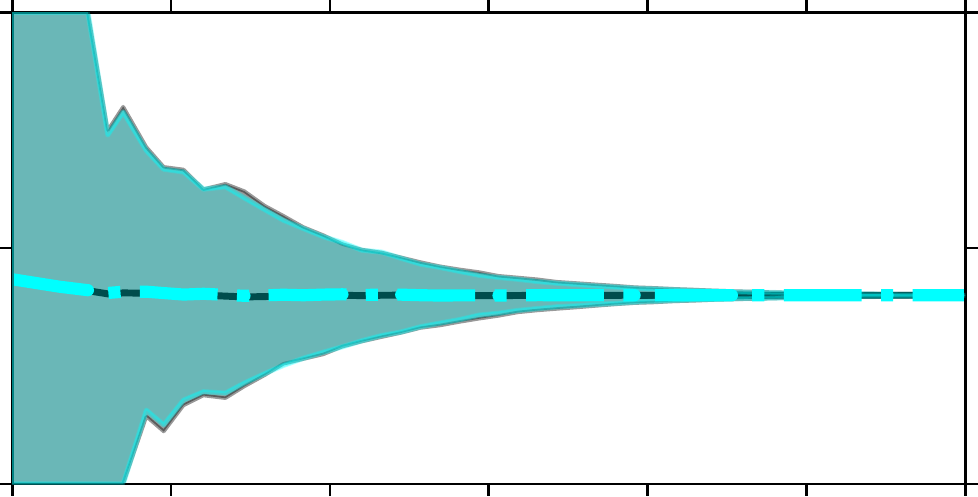}
\vspace{0.5mm}
\\
\rotatebox{90}{\hspace{0.13cm}
\phantom{\emph{Misestimated}}
}
\rotatebox{90}{\hspace{0.13cm}
\emph{Strong Bandwagon}
}
\rotatebox{90}{\hspace{0.28cm}
Estimated Value
}
\raisebox{-0.7mm}{\includegraphics[scale=0.52]{figures/yaxis.pdf}}
\includegraphics[scale=0.52]{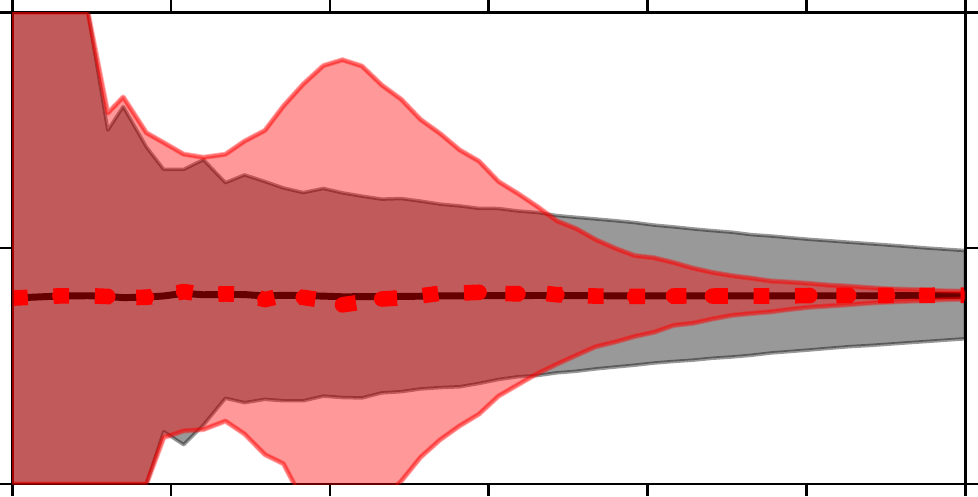} &
\includegraphics[scale=0.52]{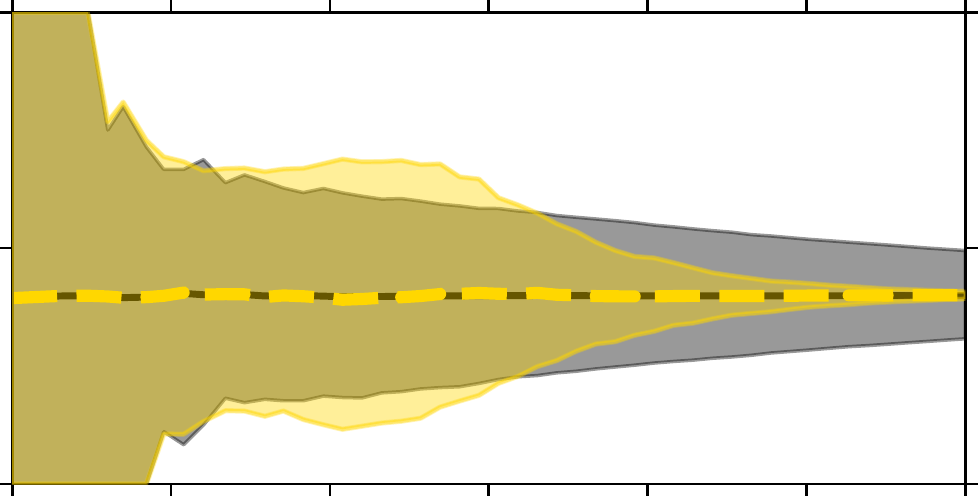} &
\includegraphics[scale=0.52]{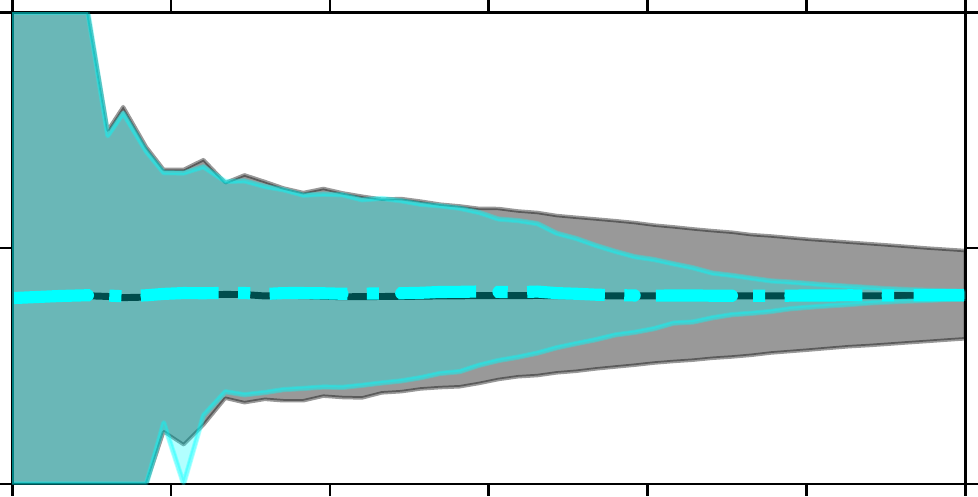}
\vspace{0.5mm}
\\
\rotatebox{90}{\hspace{0.43cm}
\emph{Misestimated} 
}
\rotatebox{90}{\hspace{0.08cm}
\emph{Strong Bandwagon} 
}
\rotatebox{90}{\hspace{0.28cm}
Estimated Value
}
\raisebox{-0.7mm}{\includegraphics[scale=0.52]{figures/yaxis.pdf}}
\includegraphics[scale=0.52]{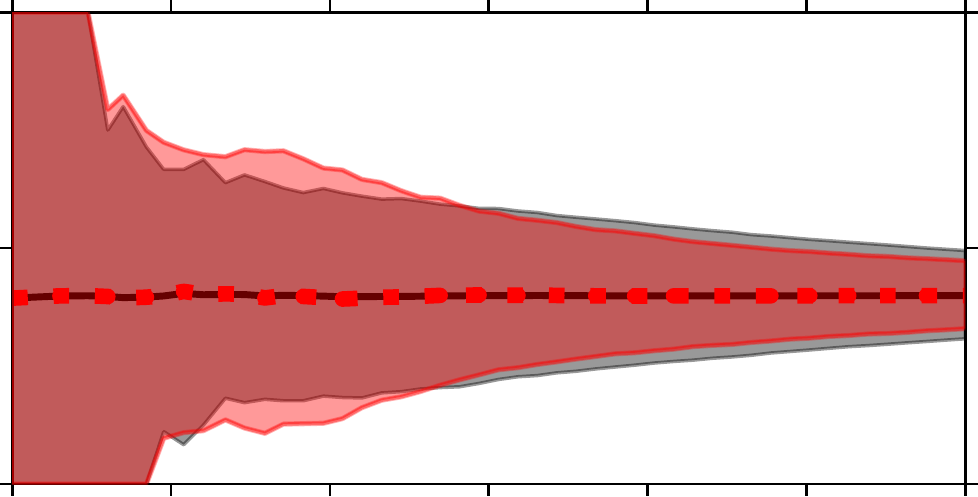} &
\includegraphics[scale=0.52]{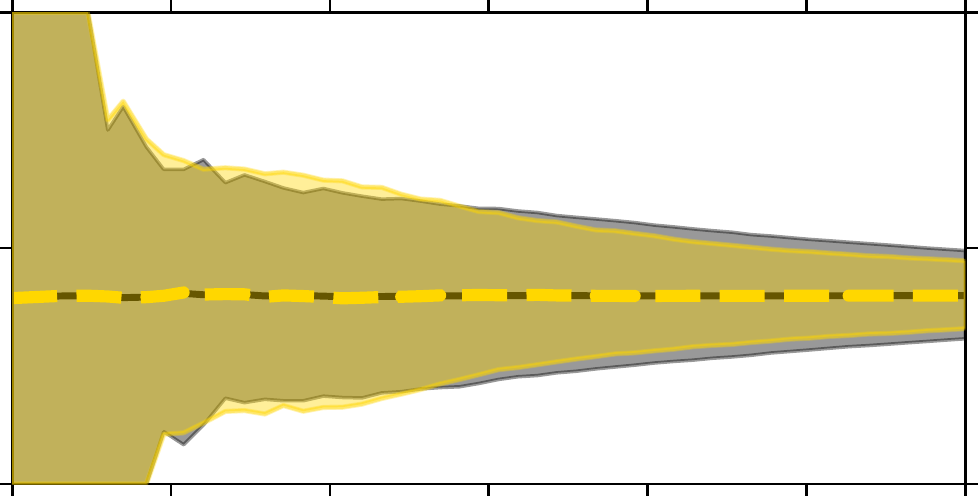} &
\hspace{-0.8mm}\includegraphics[scale=0.52]{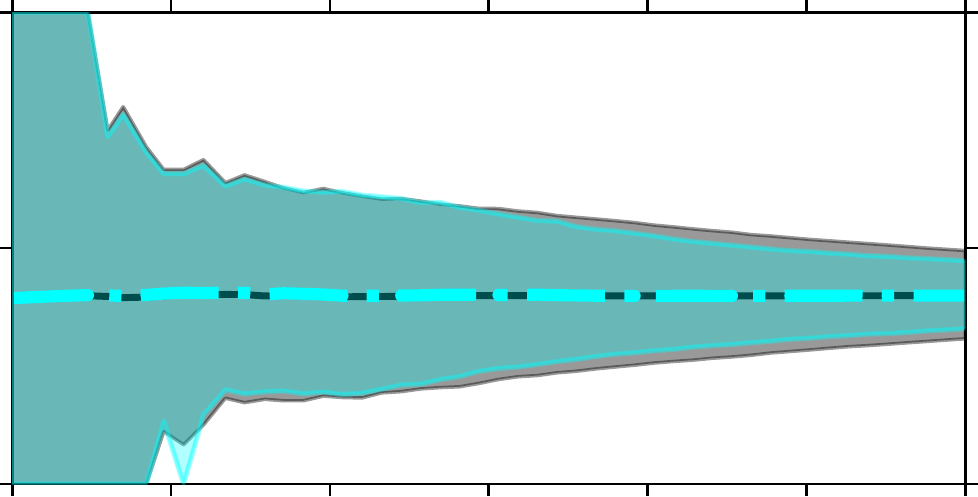}\\
\vspace{1.5mm}
\rotatebox{90}{\hspace{0cm}\phantom{x}}\rotatebox{90}{\hspace{0cm}\phantom{x}}\rotatebox{90}{\hspace{0cm}\phantom{x}}
\hspace{9mm}
\includegraphics[scale=0.52]{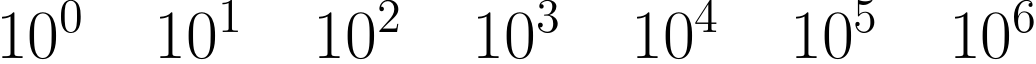} & \hspace{0.1mm}\includegraphics[scale=0.52]{figures/xaxis.pdf} & \hspace{0.1mm}\includegraphics[scale=0.52]{figures/xaxis.pdf}\\
\rotatebox{90}{\hspace{0cm}\phantom{x}}\rotatebox{90}{\hspace{0cm}\phantom{x}}\rotatebox{90}{\hspace{0cm}\phantom{x}}
\vspace{0.75mm}
\hspace{9mm}Number of Samples & Number of Samples & Number of Samples \\
\multicolumn{3}{c}{
\includegraphics[scale=0.5]{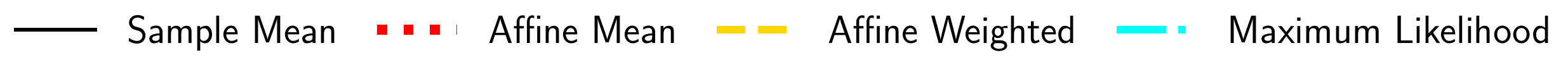}
}\\
\end{tabular}
\vspace{0.1\baselineskip}
\caption{
Convergence of $\hat{p}_n$ (left), $\tilde{p}_n$ (center) and $p_n^*$ (right) compared to $\bar{p}_n$ with $p=0.4$ and $\lambda_i=0.6+0.4\cdot0.9^{i-1}$ (top), $\lambda_i=0.1+0.9\cdot0.95^{i-1}$ (center and bottom) and misestimated $\hat{\lambda}_i=0.33+0.67\cdot0.92^{i-1}$ (bottom); bold lines show the mean estimate over $1{,}000$ independent runs, shaded areas indicate 90\% confidence intervals.
}
\label{fig:estimator_convergence}
\end{figure*}
Our results are visualized in Figure~\ref{fig:bw_level_comparison},
 which displays the convergence rate of the sample mean under various levels of bandwagon effect, and Figure~\ref{fig:estimator_convergence}, which shows the behavior of all estimators with known $\lambda$ in the weak and strong bandwagon settings.

First, we consider the influence of the bandwagon effect on the sample mean ($\bar{p}_n$):
from Figure~\ref{fig:bw_level_comparison} it is immediately clear that the effect's strength has a significant impact on the rate of convergence.
Whilst under no bandwagon effect the confidence intervals are within $0.05$ of $p$ after fewer than $300$ samples and within $0.01$ after roughly $8{,}000$ samples, under the weak bandwagon effect these numbers exceed $1{,}000$ and $33{,}000$ respectively.
Under the strong bandwagon effect the confidence intervals increase even further, exceeding $0.05$ after $10^6$ samples.
However, in all cases the confidence bounds still appear to decrease steadily, suggesting that $\bar{p}_n$ will converge eventually, as also predicted by the sufficient condition for consistency (Equation~\ref{eq:earlyconsistency}).
Nevertheless, it is also clear that the bandwagon effect introduces a significant loss in data efficiency.
In practice, this could mean that an infeasibly large amount of data is required to get a decent estimate under a strong bandwagon effect.

Next, we examine the impact on the bias of the sample mean and other estimators.
As shown in Figure~\ref{fig:estimator_convergence} (top, middle), even prior to convergence, all estimators including $\bar{p}_n$ make negligible errors in expectation, strongly suggesting that they are all unbiased.
Whilst the unbiasedness was already theoretically proven for $\bar{p}_n$ and the affine estimators $\hat{p}_n$ and $\tilde{p}_n$, it is surprising to see that even at low numbers of samples the expected error of the maximum likelihood estimate $p_n^*$ appears negligible.
Our results thus seem to indicate that all the estimators are unbiased or have an extremely small bias, further confirming that the bandwagon effect is not a bias problem.

We then focus on the weak bandwagon condition (Figure~\ref{fig:estimator_convergence} top).
Here, all estimators converge to $p$, at a rate similar to that of $\bar{p}_n$.
That the sample mean and affine estimators converge is not surprising, as the setting meets the sufficient condition for consistency.
However, in this scenario none of the estimators actually provide a clear improvement over the sample mean, indicating that they are unable to apply significant beneficial corrections to the bandwagon effect.
The lack of improvement over $\bar{p}_n$ is particularly concerning as this indicates that our estimators are similarly affected by the loss of data efficiency induced by the bandwagon effect.

Nonetheless, one may expect the convergence to slow down even further in the strong bandwagon setting shown in the middle row of Figure~\ref{fig:estimator_convergence}.
Here, both the affine and maximum likelihood estimators are actually able to converge on the true value much sooner than the sample mean.
However, confidence bounds of the former also become wider after roughly $10$ sampled ratings and stay wider until roughly $1{,}000$ samples, with the weighting of $\tilde{p}_n$ seemingly reducing this effect when compared with $\hat{p}_n$.
Thus during this period, the sample mean actually appears to be a more reliable choice in terms of expected error.
This observation illustrates that choosing between estimators is not always straightforward, as in this case the optimal choice between $\bar{p}_n$ and $\tilde{p}_n$ depends on the number of samples that are available.
Importantly, the maximum likelihood estimator does not have this drawback: its confidence bounds never appear wider than the sample mean while still converging quicker.

Across both settings, we have thus observed no noticeable bias problems for any estimator.
Importantly, the sample mean has serious convergence issues under the strong bandwagon, where the other estimators are able to converge much sooner.
Nonetheless, despite its high efficiency, the affine estimator was also found to initially have the highest variance.
It appears that when $\lambda$ values are known, the maximum likelihood estimator is the most reliable in spite of the lack of proven theoretical guarantees for this estimator.
However, even the maximum likelihood estimator is unable to fully undo the slowdown introduced by either level of the bandwagon effect, highlighting the substantial impact this effect still has on effective sample size and relevance estimation.

\subsection{Bandwagon Misestimation}

To investigate the effect of misestimated $\hat{\lambda}$ values, we look at the bottom row in Figure \ref{fig:estimator_convergence}.
In particular, it shows what happens when we underestimate a strong bandwagon effect.
Clearly, we see that no noticeable bias has been introduced to any estimator.
Interestingly, the affine estimators actually suffer less from initial variance:
however, we speculate that overestimating the bandwagon effect could instead enlarge the variance. %
Nevertheless, all our estimators converge substantially slower than when $\lambda$ was known (cf.\ middle row), with their confidence bounds now only moderately smaller than those of $\bar{p}_n$.
We thus conclude that the ability of our estimators to accelerate convergence is very sensitive to correct estimation of $\lambda$, where we do not observe an advantage for any of the estimators.

The results with misestimated $\hat{\lambda}$ values demonstrate that $\lambda$ estimation is a crucial part of effective bandwagon correction.
Researching bandwagon estimation methods is therefore an important task for future work: since without robust estimation, correcting for the bandwagon effect appears to remain an open problem.

\section{Discussion}
\label{section:discussion}

So far, we have investigated the bandwagon effect in both theoretical and empirical terms.
In contrast with the well-studied selection bias~\citep{Schnabel2016, yuan2020unbiased, satoCausalityAwareNeighborhoodMethods2021, jeunen2021pessimistic}, our experimental results indicate that the bandwagon effect mainly poses a problem for convergence without introducing noticeable statistical bias.
Yet
the effect is generally referred to as a bias by previous work~\citep{xie2020Robust,xie2021Understanding,zhong2021Quantifying,wang2014Quantifying}.
This raises a higher-level conceptual question: \emph{should the bandwagon effect be interpreted as a bias?}
The remainder of this section will contrast three possible perspectives on this question that we deem valuable for the understanding of the field.

On one hand, both our theoretical and empirical results clearly indicate that statistical bias is not the issue.
The expected value of both the sample mean and individual ratings is equal to the true relevance, with
no noticeable error in the mean value of each estimator in our experiments.
Therefore, in statistical terms, it seems that we are purely dealing with convergence and consistency issues and not with bias.
This distinction is important as solutions to inconsistency may be very different than those for bias: i.e.\ our experimental results suggest that methods based on debiasing techniques are not particularly effective at solving convergence issues.

On the other hand, our theoretical analysis also revealed that the bandwagon effect introduces conditional bias:
given that there is an error in the sample mean after $m$ ratings, the expected value of any subsequent rating or sample mean will be incorrect.
In other words, under the bandwagon effect any errors in the sample mean, which are unavoidable in practice, will subsequently introduce (conditional) bias. 
A proponent of the first perspective could argue that this view is mistaking convergence issues for bias.
Nevertheless, we argue that as long as both parties understand the difference between \emph{bias} and \emph{conditional bias}, their disagreement is mainly about semantics and not on the underlying nature of the issue.

Alternatively, one could consider the bandwagon effect as actually changing the \emph{true} preferences of users.
This view poses that the observed ratings are not a distortion of the true preferences but that the actual preferences of users have changed.
One could argue that this is in line with the users' experience as they do not make a conscious choice to alter their ratings in response to others' feedback but aim to assign ratings according to their satisfaction.
Instead, their enjoyment of a product may still be subconsciously influenced by the enjoyment of others.
Thus the advantage of this view is the apparent alignment with the users' own experience. However, their experience is precisely unaware of any subconscious factors such as the bandwagon effect.
The danger is that by not viewing the bandwagon effect as a misrepresentation of preferences, one may be less inclined to correct for it.
However, when training a recommender system, it surely seems critical to know whether high ratings were given because a user was affected by previous users, or whether they made this decision solely based on the item itself.

Apart from these conceptual perspectives, we strongly argue that in statistical terms the bandwagon effect is a problem of convergence and consistency and not of statistical bias.
Nevertheless, outside of pure statistical terms, we think that each of the three views brings a valuable perspective to the field.
Most importantly, we think the field should be aware of the difference between the bandwagon effect and statistical bias.
As long as this distinction is understood, we think it is acceptable if one chooses to conceptually categorize the bandwagon effect as another form of bias, although we would advise against it. %
Overall, our work aims to promote a more complete statistical understanding of the issues surrounding user interactions that goes beyond bias to include variance, consistency and convergence.

\section{Conclusion}

In this paper we examined the bandwagon effect: an underexplored phenomenon where users' interactions are affected by interactions of earlier users. 
While previous work has approached the bandwagon effect as problem of bias~\citep{xie2020Robust,xie2021Understanding}, our theoretical analysis revealed that it does not add bias to the rating distribution nor to the sample mean. 
Instead, we have shown that it is better viewed as a problem of convergence and consistency. %
We derived a necessary condition for consistency under the effect, thereby also adding to earlier consistency claims~\citep{xie2020Robust}.
We then proposed several novel estimators to mitigate the convergence issues.
While our empirical analysis revealed that our novel estimators can substantially increase the convergence rate under a strong bandwagon effect, all estimators also carry their own limitations.
Moreover, even when using the empirically best-performing estimator, our 
purposely simple model of the bandwagon effect still significantly 
increases the number of samples required for accurate relevance estimation.
Future work may wish to examine the impact of more complex bandwagon models (e.g. \citep{zhang2017Modeling,zhong2021Quantifying}) as well as parameter and rating scale choices on bias and convergence, both in unpersonalized as well as \emph{personalized} settings. 
For the latter case in particular, user preference may be further associated with the time of arrival (e.g. item's fans are both more likely to like it and to rate it early) or $\lambda$ values for one timestep may vary across users.
Nevertheless, even under our simple model, our findings lead us to conclude that there is no clear and well-understood solution for the bandwagon effect.

\balance

\subsection*{Acknowledgments}
This work was supported by the Google Research Scholar Program.
All content represents the opinion of the author, which is not necessarily shared or endorsed by their employers and/or sponsors.

\subsection*{Code and data}
To facilitate the reproducibility of the reported results, our experimental implementation is publicly available at \url{https://github.com/NKNY/bandwagonictir2022}.

\appendix
	\section{Appendix}
	\label{appendix:1}
	
\begin{theorem}[Bias of sample mean and single rating]
	\label{theorem:rolling_average_bias}
	$\bar{p}_n$ and $r_n$ (Equation \ref{eq:process_definition}) are unbiased estimators of parameter $p$ for all $n$:
\begin{equation}
\mathbb\forall n: \mathbb{E}\left[\bar{p}_n\right] = \mathbb{E}\left[r_n\right] = p.
\end{equation}
\end{theorem}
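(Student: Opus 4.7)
The plan is to prove the two equalities simultaneously by induction on $n$, exploiting the fact that the linearity of the bandwagon model (Equation~\ref{eq:process_definition}) in $\bar{p}_{n-1}$ means an unbiased history propagates to an unbiased next rating. Because $\bar{p}_n$ is just the arithmetic mean of $r_1, \ldots, r_n$, once I have $\mathbb{E}[r_i] = p$ for all $i \le n$, the claim $\mathbb{E}[\bar{p}_n] = p$ is immediate from linearity of expectation, so the real content is the induction on $\mathbb{E}[r_n]$.

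For the base case, I would invoke the assumption $\lambda_1 = 1$ from Equation~\ref{eq:lambda_definition}, which collapses the model to $P(r_1 = 1) = p$, giving $\mathbb{E}[r_1] = p$ and hence $\mathbb{E}[\bar{p}_1] = p$ since $\bar{p}_1 = r_1$. (The convention for $\bar{p}_0$, which appears implicitly in $P(r_1 = 1)$, is irrelevant because its coefficient $1 - \lambda_1$ is zero.)

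For the inductive step, assume $\mathbb{E}[r_i] = p$ for $i \le n-1$. Linearity of expectation gives $\mathbb{E}[\bar{p}_{n-1}] = \tfrac{1}{n-1}\sum_{i=1}^{n-1}\mathbb{E}[r_i] = p$. Then apply the tower property, conditioning on the history $\mathcal{F}_{n-1} = \sigma(r_1,\ldots,r_{n-1})$, which determines $\bar{p}_{n-1}$:
\begin{equation*}
\mathbb{E}[r_n] = \mathbb{E}\bigl[\mathbb{E}[r_n \mid \mathcal{F}_{n-1}]\bigr] = \mathbb{E}\bigl[\lambda_n p + (1-\lambda_n)\bar{p}_{n-1}\bigr] = \lambda_n p + (1-\lambda_n)\,\mathbb{E}[\bar{p}_{n-1}] = p.
\end{equation*}
A second application of linearity then yields $\mathbb{E}[\bar{p}_n] = \tfrac{1}{n}\sum_{i=1}^n p = p$, closing the induction.

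I do not anticipate a real obstacle: the argument hinges on the fact that $\lambda_n$ is deterministic (it depends only on $n$, not on realised ratings), so it factors out of the expectation, and on the affine-in-$\bar{p}_{n-1}$ form of the conditional probability, which makes the recursion preserve the mean exactly. The only point worth flagging is the implicit message — that unbiasedness of the marginal $\mathbb{E}[r_n]$ coexists with the strong dependence structure among the $r_i$ — which is exactly what motivates the later analysis of consistency and conditional bias in the paper.
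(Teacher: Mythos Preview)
Your proposal is correct and follows essentially the same approach as the paper: induction on $n$ using the base case $\lambda_1=1$ and the affine form of the model to propagate $\mathbb{E}[\bar{p}_{n-1}]=p$ to $\mathbb{E}[r_n]=p$ and $\mathbb{E}[\bar{p}_n]=p$. Your version is slightly more explicit about invoking the tower property to pass from the conditional law of $r_n$ to its marginal expectation, which the paper leaves implicit; otherwise the arguments are identical.
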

\begin{proof}
	\hspace{-0.0mm}$\mathbb{E}[\bar{p}_1]=\mathbb{E}[r_1]=p$ by Equations~\ref{eq:process_definition}-\ref{eq:lambda_definition}. Then by induction:
	\begin{align}
			&\mathbb{E}\mleft[\bar{p}_n\mright]=
			\dfrac{\mleft(n-1\mright)\mathbb{E}\mleft[\bar{p}_{n-1}\mright]+\mathbb{E}\mleft[r_n\mright]}{n}
			=\dfrac{\lambda_n p + \mleft(n-\lambda_n\mright)\mathbb{E}\mleft[\bar{p}_{n-1}\mright]}{n}=p.\nonumber
	\end{align}
	$r_n$ is unbiased as it is a weighted sum of $p$ and unbiased $\mathbb{E}\mleft[\bar{p}_{n-1}\mright]$:
	\begin{align}
		\mathbb{E}\left[r_n\right]
			= \mathbb{E}\left[\lambda_n p + \left(1-\lambda_n\right)\bar{p}_{n-1}\right]
			= \lambda_n p + (1-\lambda_n)\mathbb{E}[\bar{p}_{n-1}]
			= p.\hspace{5mm}\qedhere
	\end{align}
\end{proof}

\begin{lemma}[Variance of $r_i$]
	\label{lemma:variance_of_single_rating}
	Under the bandwagon effect the variance of any rating $r_i$ is $p(1-p)$.
\end{lemma}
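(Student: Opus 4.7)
The plan is to exploit two facts already established: first, that $r_i \in \{0,1\}$ by the construction in Section~\ref{sec:bandwagonmodel}, and second, that $\mathbb{E}[r_i] = p$ by Theorem~\ref{theorem:rolling_average_bias}. The key observation is that for any $\{0,1\}$-valued random variable, $r_i^2 = r_i$ pointwise, so the second moment coincides with the first moment regardless of whatever dependence structure the bandwagon effect induces on the sequence.

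Concretely, I would write $\mathbb{V}[r_i] = \mathbb{E}[r_i^2] - \mathbb{E}[r_i]^2$, then substitute $\mathbb{E}[r_i^2] = \mathbb{E}[r_i] = p$ (using $r_i^2 = r_i$ and Theorem~\ref{theorem:rolling_average_bias}), and conclude $\mathbb{V}[r_i] = p - p^2 = p(1-p)$. This works uniformly for all $i$, including $i=1$, since Theorem~\ref{theorem:rolling_average_bias} covers every index.

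There is essentially no obstacle here: the result is immediate from the binary nature of the ratings together with the previously proven unbiasedness. One subtle point worth emphasizing in the write-up is that although the marginal distribution of $r_i$ depends on the entire history through $\bar{p}_{i-1}$ via Equation~\ref{eq:process_definition}, the tower property (or just unconditional expectation of the marginal Bernoulli) gives us all we need, since the marginal success probability $\mathbb{E}[r_i]$ equals $p$ by Theorem~\ref{theorem:rolling_average_bias}. Thus the lemma follows in a single line of algebra and simply confirms that, despite the dependence introduced by the bandwagon effect, each individual rating retains the same marginal Bernoulli variance as in the i.i.d.\ baseline.
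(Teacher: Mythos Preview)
Your proposal is correct and essentially identical to the paper's own proof: the paper also writes $\mathbb{V}[r_i] = \mathbb{E}[r_i^2] - \mathbb{E}^2[r_i] = \mathbb{E}[r_i] - \mathbb{E}^2[r_i] = p - p^2$, invoking Theorem~\ref{theorem:rolling_average_bias} for the last step. Your additional commentary on the dependence structure is accurate but not needed for the argument.
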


\begin{proof}
	By Theorem~\ref{theorem:rolling_average_bias}:
	\begin{align}
			&\forall i: \mathbb{V}\mleft[r_i\mright]
			=\mathbb{E}\mleft[r_i^2\mright]-\mathbb{E}^2\mleft[r_i\mright]
			=\mathbb{E}\mleft[r_i\mright]-\mathbb{E}^2\mleft[r_i\mright]=p-p^2.
			\qedhere
	\end{align}
\end{proof}
	\begin{lemma}[]
	\label{lemma:covariance_of_sample_mean_and_next_rating}
	Covariance of $\bar{p}_{n-1}$ and $r_n$ is $\mleft(1-\lambda_n\mright)^2\mathbb{V}\mleft[\bar{p}_{n-1}\mright]$.
\end{lemma}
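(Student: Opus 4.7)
The plan is to evaluate the covariance by conditioning on the history of past ratings and invoking the tower property, since by Equation~\ref{eq:process_definition} the distribution of $r_n$ is fully determined by $\bar{p}_{n-1}$ up to independent Bernoulli randomness.

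First, I would use Theorem~\ref{theorem:rolling_average_bias}, which gives $\mathbb{E}[\bar{p}_{n-1}] = \mathbb{E}[r_n] = p$, to rewrite the covariance in mean-centred form
\begin{equation*}
\mathrm{Cov}\bigl(\bar{p}_{n-1},\, r_n\bigr) \;=\; \mathbb{E}\bigl[(\bar{p}_{n-1}-p)(r_n - p)\bigr].
\end{equation*}
This avoids having to separately compute $\mathbb{E}[\bar{p}_{n-1} r_n]$ and $\mathbb{E}[\bar{p}_{n-1}]\mathbb{E}[r_n]$.

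Second, I would compute the conditional mean $\mathbb{E}[r_n - p \mid \bar{p}_{n-1}]$ by substituting the bandwagon model from Equation~\ref{eq:process_definition} and cancelling the $p$ terms. The result is a linear function of $\bar{p}_{n-1}-p$, with a coefficient expressed only in terms of $\lambda_n$. Third, I would apply iterated expectation, pulling the $\bar{p}_{n-1}$-measurable factor $(\bar{p}_{n-1}-p)$ out of the inner conditional expectation. What remains reduces to a constant multiple of $\mathbb{E}[(\bar{p}_{n-1}-p)^2] = \mathbb{V}[\bar{p}_{n-1}]$, which delivers the stated expression.

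The only conceptual step is the second one: recognising that $r_n$ depends on the entire history $r_1,\dots,r_{n-1}$ only through $\bar{p}_{n-1}$, so that conditioning on $\bar{p}_{n-1}$ captures the full bandwagon-induced coupling between $r_n$ and the past. Once that Markov-like structure is identified, the remainder is one-line algebra and, crucially, does not require a closed form for $\mathbb{V}[\bar{p}_{n-1}]$ itself; that is precisely what makes this lemma a clean building block for the subsequent efficiency computations (e.g.\ Theorems~\ref{theorem:sample_mean_efficiency} and~\ref{theorem:affine_general_efficiency}), in which $\mathbb{V}[\bar{p}_{n-1}]$ is then unrolled recursively.
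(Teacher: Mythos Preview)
Your approach is essentially identical to the paper's: both write the covariance as $\mathbb{E}[(\bar{p}_{n-1}-p)(r_n-p)]$, replace $r_n$ by its conditional mean $\lambda_n p + (1-\lambda_n)\bar{p}_{n-1}$ via the tower property, and simplify to a constant times $\mathbb{V}[\bar{p}_{n-1}]$. One caution: carrying the algebra through yields the coefficient $(1-\lambda_n)$, not $(1-\lambda_n)^2$; the squared exponent in the lemma statement (and in the paper's last displayed line) is a typo, as confirmed by its downstream use in Theorem~\ref{theorem:sample_mean_efficiency}, where $(n-1)^2 + 2(n-1)(1-\lambda_n) = (n-1)(n+1-2\lambda_n)$ requires the linear factor.
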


\begin{proof}
	\begin{align}
			\mathbb{C}\text{ov} \mleft[\bar{p}_{n-1}, r_n\mright]
			&= \mathbb{E}\mleft[\mleft(\bar{p}_{n-1}-p\mright)\mleft(\lambda_n p + \mleft(1-\lambda_n\mright)\bar{p}_{n-1}-p\mright)\mright] \\
			&= \mleft(1-\lambda_n\mright)^2\mathbb{V}\mleft[\bar{p}_{n-1}\mright]. \qedhere
	\end{align}
\end{proof}
	\begin{theorem}[Efficiency of sample mean]
		\label{theorem:sample_mean_efficiency}
		The efficiency of the sample mean under the bandwagon effect defined in Equation~\ref{eq:process_definition} is \[
		\mathbb{E}\mleft[\mleft(\bar{p}_n-p\mright)^2\mright]
			=p\mleft(1-p\mright) \bigg(\dfrac{1}{n^2} + \sum_{i=1}^{n-1} 
			\dfrac{1}{i^2} \prod_{j=i+1}^{n} \dfrac{(j-1)(j+1-2\lambda_j)}{j^2}\bigg).
		\]
\end{theorem}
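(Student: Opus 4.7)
The plan is to establish a first-order linear recurrence for $\mathbb{V}[\bar{p}_n]$ and then unroll it. Since Theorem~\ref{theorem:rolling_average_bias} already gives $\mathbb{E}[\bar{p}_n]=p$, the expected squared error equals the variance, so it suffices to compute $\mathbb{V}[\bar{p}_n]$.

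I would start from the identity $\bar{p}_n = \tfrac{n-1}{n}\bar{p}_{n-1} + \tfrac{1}{n}r_n$ and apply the standard variance formula for a sum:
\begin{equation*}
\mathbb{V}[\bar{p}_n] = \mleft(\tfrac{n-1}{n}\mright)^2 \mathbb{V}[\bar{p}_{n-1}] + \tfrac{1}{n^2}\mathbb{V}[r_n] + \tfrac{2(n-1)}{n^2}\mathbb{C}\text{ov}[\bar{p}_{n-1}, r_n].
\end{equation*}
Substituting $\mathbb{V}[r_n]=p(1-p)$ from Lemma~\ref{lemma:variance_of_single_rating} and the covariance term from Lemma~\ref{lemma:covariance_of_sample_mean_and_next_rating}, and collecting the algebra $(n-1)^2 + 2(n-1)(1-\lambda_n) = (n-1)(n+1-2\lambda_n)$, yields the clean recurrence
\begin{equation*}
\mathbb{V}[\bar{p}_n] = \tfrac{(n-1)(n+1-2\lambda_n)}{n^2}\,\mathbb{V}[\bar{p}_{n-1}] + \tfrac{p(1-p)}{n^2}.
\end{equation*}

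Writing $c_j \equiv (j-1)(j+1-2\lambda_j)/j^2$ and using the base case $\mathbb{V}[\bar{p}_1]=\mathbb{V}[r_1]=p(1-p)$, the next step is to iterate the recurrence. A standard unrolling gives
\begin{equation*}
\mathbb{V}[\bar{p}_n] = p(1-p)\mleft(\prod_{j=2}^n c_j\mright) + p(1-p)\sum_{i=2}^n \tfrac{1}{i^2}\prod_{j=i+1}^n c_j,
\end{equation*}
with the convention that an empty product equals one. The leading term is exactly the $i=1$ summand of the target formula (with $1/1^2=1$), while the $i=n$ term of the sum contributes the isolated $1/n^2$; re-indexing the remaining terms produces $\sum_{i=1}^{n-1}\tfrac{1}{i^2}\prod_{j=i+1}^n c_j$, matching the statement of the theorem.

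The main obstacle is purely bookkeeping: keeping the product-index bounds consistent when unrolling and ensuring that the base-case contribution merges correctly into the $i=1$ term of the final sum. To avoid off-by-one errors, I would likely present this as a short induction on $n$ using the recurrence directly rather than by an explicit telescoping argument, verifying $n=1,2$ as sanity checks before invoking the inductive step.
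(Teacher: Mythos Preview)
Your proposal is correct and follows essentially the same route as the paper: derive the recurrence $\mathbb{V}[\bar{p}_n]=\tfrac{(n-1)(n+1-2\lambda_n)}{n^2}\mathbb{V}[\bar{p}_{n-1}]+\tfrac{p(1-p)}{n^2}$ from Theorem~\ref{theorem:rolling_average_bias} and Lemmas~\ref{lemma:variance_of_single_rating}--\ref{lemma:covariance_of_sample_mean_and_next_rating}, then unroll to the base case $\mathbb{V}[\bar{p}_1]=p(1-p)$. Your write-up is in fact more explicit about the unrolling and index bookkeeping than the paper's one-line ``unroll and collect'' remark; note only that the covariance you (correctly) use is $(1-\lambda_n)\mathbb{V}[\bar{p}_{n-1}]$, whereas the stated Lemma~\ref{lemma:covariance_of_sample_mean_and_next_rating} has a typographical squared exponent.
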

\begin{proof}
	By Theorems~\ref{theorem:rolling_average_bias} and Lemmas~~\ref{lemma:variance_of_single_rating}, \ref{lemma:covariance_of_sample_mean_and_next_rating}:
		\begin{align}
			&\mathbb{E}\mleft[\mleft(\bar{p}_n-p\mright)^2\mright]
			= \dfrac{\mleft(n-1\mright)^2}{n^2}\mathbb{V}\mleft[\bar{p}_{n-1}\mright]+\dfrac{2\mleft(n-1\mright)}{n^2}\mathbb{C}\text{ov}\mleft[r_n,\bar{p}_{n-1}\mright] \nonumber\\
			&\hspace{0.55cm}+\dfrac{\mathbb{V}\mleft[r_n\mright]}{n^2} =\dfrac{\mleft(n-1\mright)\mleft(n+1-2\lambda_n\mright)}{n^2}\mathbb{V}\mleft[\bar{p}_{n-1}\mright]+\dfrac{p(1-p)}{n^2}.
		\end{align}	
	The proof is completed by unrolling the process to $\mathbb{V}\mleft[\bar{p}_{1}\mright]=p(1-p)$ and collecting all the terms.\qedhere
\end{proof}
	\begin{theorem}[]%
		\label{theorem:sample_mean_asymptotic_efficiency}
		The asymptotic effiency of the sample mean is:
		\[
			\lim_{n \to \infty} \mathbb{E}\mleft[\mleft(\bar{p}_n-p\mright)^2\mright]
			= \lim_{n \to \infty} p\mleft(1-p\mright) \bigg(\sum_{i=1}^{n-1}
				\dfrac{1}{i \mleft(i+1\mright)}\prod_{j=i+1}^{n-1}
					\bigg(1-\dfrac{2\lambda_j}{j+1}\bigg)\bigg).
		\]
\end{theorem}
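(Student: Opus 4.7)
The plan is to start from the closed form given by Theorem~\ref{theorem:sample_mean_efficiency} and rewrite the inner product so that the telescoping factors can be extracted, leaving only the target $\prod_{j=i+1}^{n-1}(1-\tfrac{2\lambda_j}{j+1})$ multiplied by a prefactor that tends to $1$.

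First, I would split each factor of the inner product as
\[
\frac{(j-1)(j+1-2\lambda_j)}{j^2} \;=\; \frac{j-1}{j}\cdot\frac{j+1}{j}\cdot\Big(1-\frac{2\lambda_j}{j+1}\Big),
\]
so that the product over $j=i+1,\ldots,n$ factors into two telescoping products and the target product. The telescoping pieces evaluate to $\prod_{j=i+1}^n \tfrac{j-1}{j}=\tfrac{i}{n}$ and $\prod_{j=i+1}^n \tfrac{j+1}{j}=\tfrac{n+1}{i+1}$, yielding the identity
\[
\prod_{j=i+1}^n \frac{(j-1)(j+1-2\lambda_j)}{j^2} \;=\; \frac{i(n+1)}{n(i+1)}\prod_{j=i+1}^n\Big(1-\frac{2\lambda_j}{j+1}\Big).
\]
Peeling off the $j=n$ factor converts the product range to the desired $j=i+1,\ldots,n-1$, contributing an additional multiplicative term $\tfrac{n+1-2\lambda_n}{n+1}$ that cancels with the $n+1$ above.

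Substituting back into the expression from Theorem~\ref{theorem:sample_mean_efficiency} and using $\tfrac{1}{i^2}\cdot\tfrac{i}{i+1}=\tfrac{1}{i(i+1)}$ gives
\[
\mathbb{E}\mleft[\mleft(\bar{p}_n-p\mright)^2\mright] = p(1-p)\bigg(\frac{1}{n^2} + \frac{n+1-2\lambda_n}{n}\sum_{i=1}^{n-1}\frac{1}{i(i+1)}\prod_{j=i+1}^{n-1}\Big(1-\frac{2\lambda_j}{j+1}\Big)\bigg).
\]
Finally, taking $n\to\infty$ I would observe that $\tfrac{1}{n^2}\to 0$ and, since $\lambda_n\in[0,1]$, the prefactor $\tfrac{n+1-2\lambda_n}{n}\to 1$. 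Multiplying this convergent factor into the (possibly divergent or finite) sum preserves its limit in the extended reals, yielding the claimed asymptotic expression.

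The only non-routine step is the algebraic split of $(j-1)(j+1-2\lambda_j)/j^2$ that isolates two telescoping pieces from the target product; once that identity is in place, the rest is bookkeeping on index ranges and a straightforward limit, so I do not anticipate any substantive obstacle.
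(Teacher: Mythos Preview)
Your argument is correct and is essentially the same as the paper's own proof: the paper also regroups the product in Theorem~\ref{theorem:sample_mean_efficiency} to obtain
\[
\mathbb{E}\mleft[(\bar p_n-p)^2\mright]=p(1-p)\Big(\tfrac{1}{n^2}+\sum_{i=1}^{n-1}\tfrac{n+1-2\lambda_n}{i(i+1)n}\prod_{j=i+1}^{n-1}\big(1-\tfrac{2\lambda_j}{j+1}\big)\Big)
\]
and then passes to the limit. Your explicit telescoping of $\tfrac{j-1}{j}$ and $\tfrac{j+1}{j}$ is exactly the ``re-grouping numerators and denominators'' step the paper alludes to; the only cosmetic difference is that you pull the common factor $\tfrac{n+1-2\lambda_n}{n}$ outside the sum before taking the limit.
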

\begin{proof}
	By re-grouping numerators and denominators in the product of Theorem~\ref{theorem:sample_mean_efficiency} and cancelling out extra terms:
	\begin{align}
		&\mathbb{E}\mleft[\mleft(\bar{p}_n-p\mright)^2\mright]
		=p\mleft(1-p\mright) \mleft(\dfrac{1}{n^2} + \sum_{i=1}^{n-1}
				\dfrac{n+1-2\lambda_n}{i \mleft(i+1\mright)n}\prod_{j=i+1}^{n-1}
					\mleft(1-\dfrac{2\lambda_j}{j+1}\mright)\,\mright)\nonumber.
	\end{align}
	The proof is completed by taking $\lim_{n\to\infty}$ of above expression.
\end{proof}

	\begin{theorem}[Sample mean consistency]
		\label{theorem:sample_mean_consistency}
		$\sum\lambda_i^2$ diverging is a necessary condition for the consistency of $\bar{p}_n$ (Equation~\ref{eq:process_definition}):
		\begin{equation}
			\forall \epsilon>0: \lim_{n \to \infty} \text{Pr}\mleft(\mleft|\bar{p}_n-p\mright|>\epsilon\mright)=0 \longrightarrow \lim_{n \to \infty} \sum_{i=1}^n \lambda_i^2 = \infty.
		\end{equation}
\end{theorem}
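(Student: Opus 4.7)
The plan is to prove the contrapositive: assuming $\sum_{i=1}^{\infty} \lambda_i^2 < \infty$, I will show that $\bar{p}_n$ does not converge to $p$ in probability. The bridge between the two statements is the bounded convergence theorem. Since $\bar{p}_n \in [0,1]$, the quantity $(\bar{p}_n - p)^2$ is uniformly bounded by $1$, so convergence in probability of $\bar{p}_n$ to $p$ would force $\mathbb{E}\mleft[(\bar{p}_n - p)^2\mright] \to 0$. It therefore suffices to show that the asymptotic mean squared error supplied by Theorem~\ref{theorem:sample_mean_asymptotic_efficiency} stays bounded below by a strictly positive constant.

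The heart of the argument is a lower bound on the tail product $\prod_{j=i+1}^{n-1}\mleft(1 - \tfrac{2\lambda_j}{j+1}\mright)$ appearing inside the asymptotic MSE. The standard criterion is that $\prod(1 - x_j)$ with $x_j \in [0,1)$ converges to a strictly positive value if and only if $\sum x_j < \infty$. I would verify this for $x_j = 2\lambda_j/(j+1)$ by Cauchy--Schwarz,
\[
\sum_{j=1}^{\infty}\frac{\lambda_j}{j+1} \leq \mleft(\sum_{j=1}^{\infty}\lambda_j^2\mright)^{\!1/2}\mleft(\sum_{j=1}^{\infty}\frac{1}{(j+1)^2}\mright)^{\!1/2} < \infty,
\]
which is finite by the hypothesis together with $\sum (j+1)^{-2} < \infty$. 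The side condition $x_j < 1$ holds trivially for every $j \geq 2$ since $\lambda_j \leq \lambda_1 = 1$ gives $2\lambda_j/(j+1) \leq 2/3$, and the product only ever ranges over such $j$.

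Having established that for each fixed $i$ the limit $c_i := \lim_{n\to\infty}\prod_{j=i+1}^{n-1}\mleft(1 - \tfrac{2\lambda_j}{j+1}\mright)$ exists and is strictly positive, I would bound the asymptotic MSE from Theorem~\ref{theorem:sample_mean_asymptotic_efficiency} below by retaining only the $i=1$ term of the outer sum,
\[
\lim_{n\to\infty}\mathbb{E}\mleft[\mleft(\bar{p}_n - p\mright)^2\mright] \;\geq\; p\mleft(1-p\mright)\cdot \frac{1}{1\cdot 2}\cdot c_1 \;>\; 0,
\]
contradicting the $\mathbb{E}\mleft[(\bar{p}_n-p)^2\mright]\to 0$ conclusion that consistency would demand.

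The main obstacle is justifying the exchange of limits when passing from the finite-$n$ MSE to the positive lower bound, because Theorem~\ref{theorem:sample_mean_asymptotic_efficiency} writes the limit as a double limit (the sum over $i$ and the product over $j$ both depend on $n$). A clean way to sidestep this is to argue directly at finite $n$ using the elementary inequality $\prod_j(1-x_j) \geq 1 - \sum_j x_j$ for $x_j \in [0,1]$: choose $J$ large enough that $\sum_{j \geq J} 2\lambda_j/(j+1) < 1/2$, which is possible by the Cauchy--Schwarz step above, so that the tail factor of the product is uniformly at least $1/2$ for every $n$; the finite prefactor $\prod_{j=2}^{J-1}\mleft(1 - \tfrac{2\lambda_j}{j+1}\mright)$ is then a positive constant independent of $n$, and the $i=1$ summand contributes a uniform positive amount to the finite-$n$ MSE. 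The remaining step is routine: invoke the contrapositive of the bounded convergence theorem to conclude that convergence in probability fails.
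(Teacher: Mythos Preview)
Your proposal is correct and follows essentially the same route as the paper, just argued via the contrapositive. Both proofs hinge on the same three ingredients: the link between convergence in probability and $\mathbb{E}[(\bar p_n-p)^2]\to 0$ for the bounded sequence $\bar p_n$ (you invoke bounded convergence; the paper cites Lo\`eve), the Cauchy--Schwarz bound $\sum_j \lambda_j/(j{+}1) \le \big(\sum_j \lambda_j^2\big)^{1/2}\big(\sum_j (j{+}1)^{-2}\big)^{1/2}$, and the standard equivalence between $\prod_j(1-x_j)>0$ and $\sum_j x_j<\infty$. Your finite-$n$ lower bound via $\prod(1-x_j)\ge 1-\sum x_j$ is a nice extra touch that makes the limit interchange explicit, whereas the paper passes through Theorem~\ref{theorem:sample_mean_asymptotic_efficiency} and keeps the product term for a fixed $i$; both land at the same conclusion.
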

\begin{proof}
  By \citet[p. 159]{loeve1977Probability}, consistency of $\bar{p}$ implies its variance vanishing: $\forall \epsilon>0: \text{Pr}(|\bar{p}_n-p|\geq\epsilon)\geq\mathbb{V}\left[\bar{p}_n\right]-\epsilon^2$.
	Then: 
	$\forall \epsilon>0: \lim_{n \to \infty} \text{Pr}\left(\left|\bar{p}_n-p\right|>\epsilon\right)=0
	\rightarrow \lim_{n \to \infty} \text{Var}(\bar{p}_n)=0
	\rightarrow \forall i: \lim_{n \to \infty} \prod_{j=i+1}^{n-1}(1-\frac{2\lambda_j}{j+1})=0
	\rightarrow \forall i: \lim_{n \to \infty} \sum_{j=i+1}^{n-1}\frac{\lambda_j}{j+1}=\infty \rightarrow \forall i:\lim_{n\to\infty} \sqrt{\sum_{j=i+1}^{n-1}\lambda_j^2}\sqrt{\sum_{j=i+1}^{n-1}\frac{1}{\left(j+1\right)^2}}=\infty \rightarrow \lim_{n \to \infty} \sum_{i=1}^n \lambda_i^2 = \infty$.
	The second step is due to partial products in Theorem \ref{theorem:sample_mean_asymptotic_efficiency}, the third by \citet[p. 6]{leonard2012Notes}, the fourth by Cauchy-Schwarz inequality and the fifth by convergence of p-series with $p=2$.
	\qedhere

\end{proof}
	
\begin{proposition}[Conditional bias of sample mean]
	\label{proposition:rolling_average_conditional_bias}
	\hspace{0.1cm}\\
	$\mathbb{E}\mleft[\bar{p}_n-p\mid\bar{p}_m\mright]=\mleft(\bar{p}_m-p\mright)\prod_{i=m+1}^n\mleft(1-\frac{\lambda_i}{i}\mright)$ where $m<n$.
\end{proposition}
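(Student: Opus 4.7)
The plan is to establish the one-step recursion $\mathbb{E}[\bar{p}_n - p \mid \bar{p}_{n-1}] = (1 - \lambda_n/n)(\bar{p}_{n-1} - p)$ and then iterate it via the tower property down to the conditioning $\sigma$-algebra generated by $\bar{p}_m$. The key observation is that by the update rule of the sample mean, $\bar{p}_n = \frac{(n-1)\bar{p}_{n-1} + r_n}{n}$, so
\begin{equation*}
\bar{p}_n - p = \frac{(n-1)(\bar{p}_{n-1} - p) + (r_n - p)}{n}.
\end{equation*}

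Conditioning on $\bar{p}_{n-1}$ and using the process definition (Equation~\ref{eq:process_definition}), we have $\mathbb{E}[r_n \mid \bar{p}_{n-1}] = \lambda_n p + (1-\lambda_n)\bar{p}_{n-1}$, so $\mathbb{E}[r_n - p \mid \bar{p}_{n-1}] = (1-\lambda_n)(\bar{p}_{n-1} - p)$. Substituting into the display above yields
\begin{equation*}
\mathbb{E}[\bar{p}_n - p \mid \bar{p}_{n-1}] = \frac{(n-1) + (1-\lambda_n)}{n}(\bar{p}_{n-1} - p) = \Bigl(1 - \frac{\lambda_n}{n}\Bigr)(\bar{p}_{n-1} - p).
\end{equation*}

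I would then finish by induction on $n$ for fixed $m$. The base case $n = m+1$ is exactly the one-step recursion above with $\bar{p}_{n-1} = \bar{p}_m$ held fixed by the conditioning. For the inductive step, applying the tower property of conditional expectation with $\bar{p}_m$ measurable with respect to the inner $\sigma$-algebra, $\mathbb{E}[\bar{p}_n - p \mid \bar{p}_m] = \mathbb{E}[\mathbb{E}[\bar{p}_n - p \mid \bar{p}_{n-1}, \bar{p}_m] \mid \bar{p}_m] = (1 - \lambda_n/n)\, \mathbb{E}[\bar{p}_{n-1} - p \mid \bar{p}_m]$, where the inner conditional expectation reduces to the one-step result because the process is Markovian in the sense that conditional on $\bar{p}_{n-1}$ the distribution of $r_n$ does not depend on earlier sample means. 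Applying the induction hypothesis gives the claimed product form.

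The only subtlety, and thus the part requiring care, is justifying that $\mathbb{E}[r_n \mid \bar{p}_{n-1}, \bar{p}_m] = \mathbb{E}[r_n \mid \bar{p}_{n-1}]$ for $m < n-1$. This follows because, per Equation~\ref{eq:process_definition}, the conditional law of $r_n$ given the history $(r_1, \dots, r_{n-1})$ depends only on $\bar{p}_{n-1}$, and $\bar{p}_m$ is a function of that history; hence the extra conditioning on $\bar{p}_m$ is redundant. Once this is noted, the rest is a straightforward telescoping of the factors $(1-\lambda_i/i)$ from $i = m+1$ to $i = n$.
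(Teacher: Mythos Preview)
Your proposal is correct and follows essentially the same approach as the paper: derive the one-step recursion $\mathbb{E}[\bar{p}_n-p\mid\bar{p}_m]=(1-\lambda_n/n)\,\mathbb{E}[\bar{p}_{n-1}-p\mid\bar{p}_m]$ and iterate down to $\mathbb{E}[\bar{p}_m-p\mid\bar{p}_m]=\bar{p}_m-p$. The paper obtains this recursion by simply reusing the induction step of Theorem~\ref{theorem:rolling_average_bias} with conditioning on $\bar{p}_m$ throughout, whereas you make the tower property and the Markov dependence of $r_n$ on $\bar{p}_{n-1}$ explicit; your extra care on that point is welcome but not a different route.
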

\begin{proof}
	By repeatedly applying the induction step in Theorem ~\ref{theorem:rolling_average_bias}, replacing $\mathbb{E}\mleft[\bar{p}_n\mright]$ with $\mathbb{E}\mleft[\bar{p}_n\mid\bar{p}_m\mright]$ and noting that $\mathbb{E}\mleft[\bar{p}_m|\bar{p}_m\mright]=\bar{p}_m$:
	\begin{equation*}
		\mathbb{E}\mleft[\bar{p}_n-p\,|\,\bar{p}_m\mright]
		= \mathbb{E}\mleft[\bar{p}_{n-1}-p\,|\,\bar{p}_m\mright]\big(1-\frac{\lambda_n}{n}\big)
		= \big(\bar{p}_{m}-p\big)\hspace{-1mm}\prod_{i=m+1}^n\hspace{-1mm}\big(1-\frac{\lambda_i}{i}\big).\qedhere
	\end{equation*}
\end{proof}
	\begin{proposition}[Error bound of $\bar{p}_n$]
	\label{proposition:rolling_average_error}
	If $\lambda_i=c^{i-1}$ s.t. $c \in \mleft(0,1\mright)$, the asymptotic expected absolute error of the sample mean is at least $2p\mleft( 1-p\mright) \exp \mleft( -c\phi\mleft( c,1,2\mright)-c^{2}\phi \mleft( c^{2},2,2\mright) \mright)$, where $\phi( z,s,$ $a) \equiv \sum ^{\infty }_{k=0}\frac{z^{k}}{\mleft( a+k\mright)^s }$ is Lerch transcendent and converges to a fixed value $\forall z<1$ \citep{ferreira2017New}. 
\end{proposition}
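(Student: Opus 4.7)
My approach is to bound $\mathbb{E}|\bar{p}_n-p|$ from below by conditioning on the first rating $r_1$, then specialize to $\lambda_i=c^{i-1}$, and finally control the resulting infinite product via the elementary logarithmic inequality $\ln(1-x)\geq -x-x^2$.

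First, I would apply Proposition~\ref{proposition:rolling_average_conditional_bias} with $m=1$, using $\bar{p}_1=r_1$, to obtain
\[
\mathbb{E}[\bar{p}_n-p\mid r_1]=(r_1-p)\prod_{i=2}^{n}\mleft(1-\dfrac{\lambda_i}{i}\mright).
\]
Since $\lambda_i\in[0,1]$ and $i\geq 2$, each factor lies in $[1/2,1]$, so the product is non-negative. Applying Jensen's inequality conditionally on $r_1$ and then taking the outer expectation yields
\[
\mathbb{E}|\bar{p}_n-p|\geq \mathbb{E}\bigl|\mathbb{E}[\bar{p}_n-p\mid r_1]\bigr|=\mathbb{E}|r_1-p|\cdot\prod_{i=2}^{n}\mleft(1-\dfrac{\lambda_i}{i}\mright)=2p(1-p)\prod_{i=2}^{n}\mleft(1-\dfrac{\lambda_i}{i}\mright),
\]
where $\mathbb{E}|r_1-p|=p(1-p)+(1-p)p=2p(1-p)$. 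This step already reproduces the prefactor $2p(1-p)$ and matches the degenerate $c\to 0$ case in which $\bar{p}_n\equiv r_1$.

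Second, substituting $\lambda_i=c^{i-1}$ and letting $n\to\infty$, the task reduces to lower bounding $\prod_{i=2}^{\infty}(1-c^{i-1}/i)$. Taking logs, I would invoke $\ln(1-x)\geq -x-x^2$, which holds on $[0,x^*]$ with $x^*>2/3$ (verifiable by analyzing $f(x)=\ln(1-x)+x+x^2$, which has $f(0)=0$ and $f'(x)=x(1-2x)/(1-x)$, hence $f$ is positive up to $x^*\approx 0.683$). The bound applies since $c^{i-1}/i\leq c/2 < 1/2$ for all $i\geq 2$ and $c\in(0,1)$. Summing and reindexing $k=i-2$, I would use
\[
\sum_{i=2}^{\infty}\dfrac{c^{i-1}}{i}=c\phi(c,1,2),\qquad \sum_{i=2}^{\infty}\dfrac{c^{2(i-1)}}{i^2}=c^{2}\phi(c^{2},2,2),
\]
so that $\prod_{i=2}^{\infty}(1-c^{i-1}/i)\geq \exp(-c\phi(c,1,2)-c^{2}\phi(c^{2},2,2))$. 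Combined with the first step, this yields the claimed bound.

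The main obstacle is recognizing that conditioning on $r_1$ provides a much cleaner route than attempting to lower bound the nested sum--product in Theorem~\ref{theorem:sample_mean_asymptotic_efficiency} directly; the latter would struggle to recover the exact $2p(1-p)$ prefactor that is tight at $c=0$. Once this conditioning is in place, the remaining estimates reduce to standard analysis of a rapidly convergent infinite product, and the reindexing to Lerch transcendents is essentially bookkeeping.
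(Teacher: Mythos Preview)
Your proof is correct and follows essentially the same route as the paper: condition on $r_1$, apply Jensen's inequality together with Proposition~\ref{proposition:rolling_average_conditional_bias} at $m=1$ to extract the factor $2p(1-p)\prod_{i\geq 2}(1-\lambda_i/i)$, then bound the logarithm of the product via $\ln(1-x)\geq -x-x^2$ and reindex the resulting sums into Lerch transcendents. The only cosmetic difference is that the paper quotes the inequality in the equivalent form $\ln(1+x)\geq x-x^2$ for $x>-0.68$ with a citation, whereas you verify its range of validity directly.
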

\begin{proof}
		\begin{align}
			&\lim_{n \to \infty} \mathbb{E}\mleft[\mleft|\bar{p}_n-p\mright|\mright] 
			\geq \lim_{n \to \infty} \mathbb{E}\mleft[\mleft|\mathbb{E}\mleft[\bar{p}_n-p|r_1\mright]\mright|\mright]
			= 2p\mleft(1-p\mright)\nonumber\\
			&\cdot\exp\mleft(\sum_{i=2}^{\infty}\ln\mleft(1-\dfrac{\lambda_i}{i}\mright)\mright)
			\geq 2p\mleft(1-p\mright)\exp\bigg(\sum_{i=2}^{\infty}\Big(-\dfrac{\lambda_i}{i}-\dfrac{\lambda_i^2}{i^2}\Big)\bigg)\nonumber \\
			&=2p\mleft(1-p\mright) \exp \Bigg( -c\sum ^{\infty}_{i=0}\dfrac{c^{i}}{\mleft( 2+i\mright) }
			-c^{2}\sum ^{\infty}_{i=0}\dfrac{c^{2i}}{\mleft( 2+i\mright)^2 }\Bigg).\nonumber
		\end{align}
		\label{eq:rolling_average_error}
	The first step is due to the law of total expectation and Jensen's inequality for conditional expectation. The second is by Proposition~\ref{proposition:rolling_average_conditional_bias} with $m=1$. The third is due to a known inequality $\forall x> -0.68: \ln\mleft(1+x\mright) \geq x-x^2$ \citep{kozma2021Useful}.
\end{proof}
	\begin{theorem}[]
		\label{theorem:affine_single_bias}
		Each $\hat{r}_{i}$ (and thus also $\hat{p}_n$) is unbiased for all $\hat{\lambda}_i$.
\end{theorem}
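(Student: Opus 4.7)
The key observation that makes the claim work is that by Theorem~\ref{theorem:rolling_average_bias}, \emph{both} the current rating $r_i$ and the running mean $\bar{p}_{i-1}$ already have expectation exactly $p$. The affine correction in $\hat{r}_i = (r_i - (1-\hat{\lambda}_i)\bar{p}_{i-1})/\hat{\lambda}_i$ is a combination whose coefficients, $1$ and $-(1-\hat{\lambda}_i)$, sum to $\hat{\lambda}_i$, which is precisely the normalizer. So once every random quantity in the numerator is replaced by its mean $p$, the dependence on $\hat{\lambda}_i$ cancels. This is why any value of $\hat{\lambda}_i$ (including a misestimated or clipped one) still yields an unbiased estimate, in stark contrast with standard IPS where misestimating propensities destroys unbiasedness.

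The plan is therefore very short. First I would treat $\hat{\lambda}_i$ as a deterministic constant (as it is supplied by the analyst or by the estimation scheme of Section~\ref{sec:effectestimation} independently of the current interaction) and apply linearity of expectation to pull $1/\hat{\lambda}_i$ outside:
\begin{equation*}
\mathbb{E}[\hat{r}_i] = \frac{1}{\hat{\lambda}_i}\bigl(\mathbb{E}[r_i] - (1-\hat{\lambda}_i)\,\mathbb{E}[\bar{p}_{i-1}]\bigr).
\end{equation*}
Next I would invoke Theorem~\ref{theorem:rolling_average_bias} to substitute $\mathbb{E}[r_i]=p$ and $\mathbb{E}[\bar{p}_{i-1}]=p$, obtaining numerator $p - (1-\hat{\lambda}_i)p = \hat{\lambda}_i p$, so that $\mathbb{E}[\hat{r}_i]=p$.

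Extending this to $\hat{p}_n = (\sum_{i=1}^n \omega_i)^{-1}\sum_{i=1}^n \omega_i \hat{r}_i$ is a second application of linearity: the weights $\omega_i$ are strictly positive and deterministic by the definition in Equation~\ref{eq:affine_weighted_mean_definition}, so $\mathbb{E}[\hat{p}_n] = (\sum_i \omega_i)^{-1}\sum_i \omega_i \cdot p = p$. No conditioning on $\bar{p}_{i-1}$ or induction on $i$ is needed here, which is the essential point distinguishing this result from the stronger conditional unbiasedness claim of Theorem~\ref{theorem:affine_single_conditional_bias}.

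The only genuine subtlety, and the one point worth emphasizing in the write-up, is the independence assumption on $\hat{\lambda}_i$: if $\hat{\lambda}_i$ were a random variable correlated with $r_i$ or with $\bar{p}_{i-1}$ (for instance, if it were estimated from the very same sequence of interactions), then $1/\hat{\lambda}_i$ could not be pulled out of the expectation and the cancellation above would fail. Thus the theorem is best read as: for any deterministic choice of the values $\hat{\lambda}_i$, unbiasedness holds, which matches how $\hat{\lambda}_i$ arises in practice (either as the true $\lambda_i$, a clipped version, or a maximum likelihood estimate obtained from an independent held-out sample as in Section~\ref{sec:effectestimation}).
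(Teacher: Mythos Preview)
Your proof is correct and matches the paper's approach: both apply linearity of expectation together with Theorem~\ref{theorem:rolling_average_bias} ($\mathbb{E}[r_i]=\mathbb{E}[\bar{p}_{i-1}]=p$) so that the $\hat{\lambda}_i$-dependent terms cancel. The paper uses the slightly different algebraic split $\hat{r}_i=(r_i-\bar{p}_{i-1})/\hat{\lambda}_i+\bar{p}_{i-1}$ and does not spell out the $\hat{p}_n$ step, but the substance is identical.
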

\begin{proof}
	By Theorem ~\ref{theorem:rolling_average_bias}, for any choice of $i$ and $\hat{\lambda}_i$:
	\begin{equation*}
			\mathbb{E}\mleft[\hat{r}_{i} \mid \hat{\lambda}_i\mright]
			= \mathbb{E}\bigg[ \dfrac{r_i-(1-\hat{\lambda}_i)\bar{p}_{i-1}}{\hat{\lambda}_i}\bigg] 
			= \dfrac{\mathbb{E}\mleft[ r_i-\bar{p}_{i-1} \mright]}{\hat{\lambda}_i} + \mathbb{E}\mleft[ \bar{p}_{i-1} \mright]
			= p.
\qedhere
	\end{equation*}
\end{proof}
	\begin{theorem}[Conditional bias of $\hat{r}_{i}\mid\hat{\lambda}_i=\lambda_i$]
		\label{theorem:affine_single_conditional_bias}
		If $\hat{\lambda}_i=\lambda_i$, the affine mean estimator $\hat{r}_{i}$ is an unbiased estimator of $p$ when conditioned on $\hat{r}_j$ or $\bar{p}_{j}$ s.t. $\forall j<i$.
\end{theorem}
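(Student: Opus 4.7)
The plan is to verify both conditional unbiasedness statements by a direct calculation that exploits the tower property and the fact that the definition of $\hat{r}_i$ was precisely designed to strip away the bandwagon-induced dependence on the past. Since $\hat{\lambda}_i = \lambda_i$, I rewrite
\begin{equation*}
\hat{r}_i = \frac{r_i - (1-\lambda_i)\bar{p}_{i-1}}{\lambda_i},
\end{equation*}
so conditional unbiasedness reduces to computing $\mathbb{E}[r_i \mid \cdot]$ and $\mathbb{E}[\bar{p}_{i-1} \mid \cdot]$ under the appropriate conditioning, and observing they combine to cancel the unknown sample-mean term.

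For the case of conditioning on $\bar{p}_j$ with $j < i$, I first note that the bandwagon model (Equation~\ref{eq:process_definition}) states $\mathbb{E}[r_i \mid r_1,\dots,r_{i-1}] = \lambda_i p + (1-\lambda_i)\bar{p}_{i-1}$, which is measurable with respect to $\sigma(\bar{p}_{i-1})$. Applying the tower property twice and using $\sigma(\bar{p}_j) \subseteq \sigma(\bar{p}_{i-1})$ when $j \le i-1$ gives
\begin{equation*}
\mathbb{E}[r_i \mid \bar{p}_j] = \lambda_i p + (1-\lambda_i)\,\mathbb{E}[\bar{p}_{i-1} \mid \bar{p}_j].
\end{equation*}
Substituting into the definition of $\hat{r}_i$ and taking conditional expectation, the two $(1-\lambda_i)\mathbb{E}[\bar{p}_{i-1} \mid \bar{p}_j]$ terms cancel, leaving $\mathbb{E}[\hat{r}_i \mid \bar{p}_j] = p$. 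Note that this argument does not require the explicit expression for $\mathbb{E}[\bar{p}_{i-1} \mid \bar{p}_j]$ from Proposition~\ref{proposition:rolling_average_conditional_bias}; whatever that conditional mean is, it cancels.

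For conditioning on $\hat{r}_j$ with $j < i$, I observe that $\hat{r}_j = (r_j - (1-\lambda_j)\bar{p}_{j-1})/\lambda_j$ is a deterministic function of $(r_1,\dots,r_j)$, so $\sigma(\hat{r}_j) \subseteq \sigma(r_1,\dots,r_j)$. By the tower property,
\begin{equation*}
\mathbb{E}[\hat{r}_i \mid \hat{r}_j] = \mathbb{E}\bigl[\mathbb{E}[\hat{r}_i \mid r_1,\dots,r_j] \,\big|\, \hat{r}_j\bigr],
\end{equation*}
and the inner conditional expectation equals $p$ by the same cancellation argument as above, now carried out with respect to the larger $\sigma$-algebra $\sigma(r_1,\dots,r_j)$. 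Since the inner expression is the constant $p$, the outer expectation is also $p$.

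The main obstacle is nothing more than careful bookkeeping of the filtration: one must justify that $\mathbb{E}[r_i \mid \mathcal{F}] = \lambda_i p + (1-\lambda_i)\mathbb{E}[\bar{p}_{i-1} \mid \mathcal{F}]$ for each choice of conditioning $\sigma$-algebra $\mathcal{F} \in \{\sigma(\bar{p}_j), \sigma(r_1,\dots,r_j)\}$, which follows by iterating the one-step tower property from time $i-1$ down to $j$. No nontrivial inequality or limiting argument is required; the affine construction of $\hat{r}_i$ does all the work.
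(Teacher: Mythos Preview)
Your overall approach---use the tower property to reduce $\mathbb{E}[r_i\mid\cdot]$ to $\lambda_i p+(1-\lambda_i)\mathbb{E}[\bar p_{i-1}\mid\cdot]$, then watch the $(1-\lambda_i)$ terms cancel---is exactly what the paper does. However, your justification for the $\bar p_j$ case contains a false step: you invoke $\sigma(\bar p_j)\subseteq\sigma(\bar p_{i-1})$, but this inclusion does \emph{not} hold in general. For instance, with $j=1$ and $i=3$, $\bar p_1=r_1$ while $\bar p_2=(r_1+r_2)/2$; knowing $r_1+r_2$ does not determine $r_1$, so $\sigma(r_1)\not\subseteq\sigma(r_1+r_2)$. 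Your ``apply the tower property twice'' route through $\sigma(\bar p_{i-1})$ therefore breaks down at the second step.

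The fix is immediate and is what you already do correctly in the $\hat r_j$ case: use the full past $\sigma(r_1,\dots,r_{i-1})$ as the intermediate $\sigma$-algebra rather than $\sigma(\bar p_{i-1})$. Since $\bar p_j$ is a function of $r_1,\dots,r_j$, one has $\sigma(\bar p_j)\subseteq\sigma(r_1,\dots,r_{i-1})$, and a single tower step gives $\mathbb{E}[r_i\mid\bar p_j]=\mathbb{E}[\lambda_i p+(1-\lambda_i)\bar p_{i-1}\mid\bar p_j]$. The paper streamlines both cases at once by proving the identity for an arbitrary function $f_i=f(r_1,\dots,r_{i-1})$ of the past ratings and then specializing to $f_i=\bar p_j$ or $f_i=\hat r_j$; you may find that unification cleaner than treating the two conditionings separately.
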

\begin{proof}
	For any function of previous ratings $f_{i}=f(r_1, \ldots, r_{i-1})$:
	\begin{align}
			\mathbb{E}\left[\hat{r}_{i}\mid f_{i},\hat{\lambda}_i\right] &= \mathbb{E}\bigg[\dfrac{\lambda_i p + (1-\lambda_i)\bar{p}_{i-1}-(1-\hat{\lambda}_i)\bar{p}_{i-1}}{\hat{\lambda}_i}\mid f_{i}\bigg] \nonumber\\
			&=\dfrac{\lambda_i}{\hat{\lambda}_i}p+\mleft(\dfrac{\hat{\lambda}_i-\lambda_i}{\hat{\lambda}_i}\mright)\mathbb{E}\mleft[\bar{p}_{i-1}\mid f_i\mright] = p.
		\label{eq:affine_conditional_bias}
	\end{align}
	The proof is completed by setting $f_i=\hat{r}_j$ or $f_i=\bar{p}_{j}$.
\end{proof}

\begin{corollary}[Conditional bias of $\hat{r}_{i}\mid\hat{\lambda}_i\neq\lambda_i$]
\label{corollary:conditional_misestimation_affine_bias}
		Affine estimator $\hat{r}_{i}$ is not unbiased when conditioned on a function of previous interactions as the last equality of Equation~\ref{eq:affine_conditional_bias} may not hold.
\end{corollary}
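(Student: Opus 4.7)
The plan is to leverage the already-derived intermediate expression in the proof of Theorem~\ref{theorem:affine_single_conditional_bias}. That proof computes
\[
\mathbb{E}\mleft[\hat{r}_i \mid f_i, \hat{\lambda}_i\mright] = \dfrac{\lambda_i}{\hat{\lambda}_i}p + \dfrac{\hat{\lambda}_i - \lambda_i}{\hat{\lambda}_i}\,\mathbb{E}\mleft[\bar{p}_{i-1} \mid f_i\mright]
\]
and then reduces this to $p$ by the cancellation that occurs when $\hat{\lambda}_i = \lambda_i$. The key observation for the corollary is that this final equality with $p$ requires either (i) the coefficient $(\hat{\lambda}_i - \lambda_i)/\hat{\lambda}_i$ to vanish, which is ruled out by the hypothesis $\hat{\lambda}_i \neq \lambda_i$, or (ii) $\mathbb{E}\mleft[\bar{p}_{i-1} \mid f_i\mright]$ to equal $p$. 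My plan is therefore to exhibit a function $f_i$ of previous interactions for which the second condition provably fails.

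First, I would take $f_i = \bar{p}_j$ for some $1 < j < i$, and apply Proposition~\ref{proposition:rolling_average_conditional_bias} to obtain
\[
\mathbb{E}\mleft[\bar{p}_{i-1} \mid \bar{p}_j\mright] - p = \mleft(\bar{p}_j - p\mright)\prod_{k=j+1}^{i-1}\mleft(1 - \dfrac{\lambda_k}{k}\mright).
\]
Since $\lambda_k \in [0,1]$ and $k \geq 2$, every factor of the product is strictly positive, so the right-hand side is nonzero whenever $\bar{p}_j \neq p$, an event with positive probability in the bandwagon model. Plugging this back into the intermediate expression yields a nonzero residual $\mleft((\hat{\lambda}_i - \lambda_i)/\hat{\lambda}_i\mright)(\mathbb{E}[\bar{p}_{i-1} \mid \bar{p}_j] - p)$, so the conditional expectation of $\hat{r}_i$ strictly differs from $p$ on that event.

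The remaining step is simply to note that this contradicts conditional unbiasedness: for the specified $f_i = \bar{p}_j$, the last equality of Equation~\ref{eq:affine_conditional_bias} breaks down, proving the claim. I do not foresee a serious obstacle here, as the argument is a direct inspection of the derivation that produced Equation~\ref{eq:affine_conditional_bias}. The one subtlety to state clearly is that the corollary concerns \emph{conditional} unbiasedness given a realization of past ratings; Theorem~\ref{theorem:affine_single_bias} already guarantees unconditional unbiasedness for arbitrary $\hat{\lambda}_i$, so the corollary should be framed as showing that misestimation transfers the imperfect knowledge about $\bar{p}_{i-1}$ into a residual error that only the correct $\lambda_i$ could annihilate.
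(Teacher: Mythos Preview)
Your proposal is correct and aligns with the paper's approach. In fact, the paper does not supply a separate proof for this corollary at all: it simply states that the last equality in Equation~\ref{eq:affine_conditional_bias} ``may not hold'' when $\hat{\lambda}_i\neq\lambda_i$ and leaves it at that. Your argument makes this observation rigorous by exhibiting a concrete conditioning event $f_i=\bar{p}_j$ and invoking Proposition~\ref{proposition:rolling_average_conditional_bias} to show $\mathbb{E}[\bar{p}_{i-1}\mid\bar{p}_j]\neq p$ on a positive-probability event, so that the residual term $\big((\hat{\lambda}_i-\lambda_i)/\hat{\lambda}_i\big)\big(\mathbb{E}[\bar{p}_{i-1}\mid\bar{p}_j]-p\big)$ is nonzero. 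One cosmetic point: your restriction $1<j<i$ excludes the case $i=2$; allowing $j\geq 1$ (which Proposition~\ref{proposition:rolling_average_conditional_bias} permits) covers all $i\geq 2$ without any change to the argument.
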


	\begin{lemma}[Covariance of affine estimators]
	\label{lemma:affine_covariance}
	$\hat{r}_{i}$ and $\hat{r}_{j<i}$ are uncorrelated as long as $\hat{\lambda}_i=\lambda_i$.
\end{lemma}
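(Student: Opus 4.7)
The plan is to compute $\mathbb{C}\text{ov}[\hat{r}_i, \hat{r}_j] = \mathbb{E}[\hat{r}_i \hat{r}_j] - \mathbb{E}[\hat{r}_i]\mathbb{E}[\hat{r}_j]$ directly and show that it vanishes. Since Theorem~\ref{theorem:affine_single_bias} already gives $\mathbb{E}[\hat{r}_i] = \mathbb{E}[\hat{r}_j] = p$, the product of marginals is simply $p^2$, so the entire task reduces to proving $\mathbb{E}[\hat{r}_i \hat{r}_j] = p^2$.

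To handle the joint expectation I would apply the tower property, conditioning on the sigma-algebra generated by $r_1, \ldots, r_{i-1}$. The key observation is that for $j < i$, the estimator $\hat{r}_j = (r_j - (1-\lambda_j)\bar{p}_{j-1})/\lambda_j$ is a deterministic function of $r_1, \ldots, r_j \subseteq r_1, \ldots, r_{i-1}$, so it can be pulled outside the inner conditional expectation as a constant. This gives
\begin{equation*}
\mathbb{E}[\hat{r}_i \hat{r}_j] = \mathbb{E}\bigl[\hat{r}_j \cdot \mathbb{E}[\hat{r}_i \mid r_1, \ldots, r_{i-1}]\bigr].
\end{equation*}

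The next step is to invoke Theorem~\ref{theorem:affine_single_conditional_bias}: since $\hat{\lambda}_i = \lambda_i$ by assumption, and the conditional distribution of $r_i$ given the entire history $r_1, \ldots, r_{i-1}$ depends on that history only through $\bar{p}_{i-1}$ (which is a function of the history), the conditional unbiasedness $\mathbb{E}[\hat{r}_i \mid r_1, \ldots, r_{i-1}] = p$ follows by taking $f_i = (r_1, \ldots, r_{i-1})$ in the proof of that theorem. Substituting this back yields $\mathbb{E}[\hat{r}_i \hat{r}_j] = p \cdot \mathbb{E}[\hat{r}_j] = p^2$, completing the proof.

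I do not expect any serious obstacle here; the argument is essentially a one-line application of the tower property combined with Theorem~\ref{theorem:affine_single_conditional_bias}. The only subtlety worth flagging is that conditional unbiasedness in Theorem~\ref{theorem:affine_single_conditional_bias} was stated for functions of past ratings such as $\hat{r}_j$ or $\bar{p}_j$, but the same proof goes through when conditioning on the full history $r_1, \ldots, r_{i-1}$, since that history determines $\bar{p}_{i-1}$ and hence the distribution of $r_i$. Crucially, the result fails without the assumption $\hat{\lambda}_i = \lambda_i$, in line with Corollary~\ref{corollary:conditional_misestimation_affine_bias}.
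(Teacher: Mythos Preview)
Your proposal is correct and follows essentially the same route as the paper: the paper also computes $\mathbb{E}[\hat{r}_i\hat{r}_j]$ via the tower property and Theorem~\ref{theorem:affine_single_conditional_bias}, the only cosmetic difference being that it conditions on $\hat{r}_j$ directly rather than on the full history $r_1,\ldots,r_{i-1}$. Since Theorem~\ref{theorem:affine_single_conditional_bias} is proved for arbitrary functions of past ratings, both conditionings are covered and the arguments are equivalent.
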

\begin{proof}
	By Theorem~\ref{theorem:affine_single_conditional_bias} and the law of total expectation:
	\begin{align}
			\mathbb{E}\mleft[\hat{r}_{i}\hat{r}_{j}\mright]
			&= \mathbb{E}\mleft[\mathbb{E}\mleft[\hat{r}_{i}\hat{r}_{j}\mid \hat{r}_{j}\mright]\mright]
			= \mathbb{E}\mleft[\hat{r}_{j}\mathbb{E}\mleft[\hat{r}_{i}\mid \hat{r}_{j}\mright]\mright] \nonumber \\
			&= \mathbb{E}\mleft[\hat{r}_{j}\mathbb{E}\mleft[\hat{r}_{i}\mright]\mright]
			= \mathbb{E}\mleft[\hat{r}_{j}\mright]\mathbb{E}\mleft[\hat{r}_{i}\mright]. \qedhere
	\end{align}
\end{proof}

\begin{theorem}
\label{theorem:affine_general_efficiency}
The variance of affine estimator in Equation~\ref{eq:affine_weighted_mean_definition} is:
\[
	\mathbb{V}\mleft[\hat{p}_n\mright]		
	= \frac{1}{\mleft(\sum_{i=1}^n \omega_i\mright)^2} \sum_{i=1}^n \frac{\omega_i^2}{\lambda_i^2} \mleft( p\mleft(1-p\mright) - \mleft(1-\lambda_i\mright)^2 \mathbb{V}\mleft[\bar{p}_{i-1}\mright]\mright).
\]
\end{theorem}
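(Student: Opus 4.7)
The plan is to reduce the variance of the weighted mean to a sum of individual variances by exploiting the fact that the single-rating affine estimators are pairwise uncorrelated, and then compute each $\mathbb{V}[\hat{r}_i]$ directly from the definition of $\hat{r}_i$ as an affine combination of $r_i$ and $\bar{p}_{i-1}$.

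First, I would invoke Lemma~\ref{lemma:affine_covariance} (which requires $\hat{\lambda}_i=\lambda_i$, implicitly assumed here since the statement uses the true $\lambda_i$) to conclude that for any $i\neq j$, $\mathbb{C}\text{ov}[\hat{r}_i,\hat{r}_j]=0$. Together with Theorem~\ref{theorem:affine_single_bias}, this implies that the variance of the weighted sum decomposes cleanly:
\begin{equation*}
\mathbb{V}[\hat{p}_n] \;=\; \frac{1}{\left(\sum_{i=1}^n \omega_i\right)^2}\sum_{i=1}^n \omega_i^2\, \mathbb{V}[\hat{r}_i].
\end{equation*}

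Next, I would compute $\mathbb{V}[\hat{r}_i]$. Writing $\hat{r}_i = \lambda_i^{-1}\bigl(r_i - (1-\lambda_i)\bar{p}_{i-1}\bigr)$, the variance formula for a linear combination gives
\begin{equation*}
\mathbb{V}[\hat{r}_i] \;=\; \frac{1}{\lambda_i^2}\Bigl(\mathbb{V}[r_i] + (1-\lambda_i)^2\,\mathbb{V}[\bar{p}_{i-1}] - 2(1-\lambda_i)\,\mathbb{C}\text{ov}[r_i,\bar{p}_{i-1}]\Bigr).
\end{equation*}
Now I substitute $\mathbb{V}[r_i] = p(1-p)$ from Lemma~\ref{lemma:variance_of_single_rating} and the covariance expression from Lemma~\ref{lemma:covariance_of_sample_mean_and_next_rating}. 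The two terms involving $\mathbb{V}[\bar{p}_{i-1}]$ combine into a single negative contribution, leaving
\begin{equation*}
\mathbb{V}[\hat{r}_i] \;=\; \frac{1}{\lambda_i^2}\bigl(p(1-p) - (1-\lambda_i)^2\,\mathbb{V}[\bar{p}_{i-1}]\bigr).
\end{equation*}
Substituting back into the decomposition of $\mathbb{V}[\hat{p}_n]$ yields the claimed formula.

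The main obstacle is the cancellation between the pure variance term $(1-\lambda_i)^2\mathbb{V}[\bar{p}_{i-1}]$ and the covariance cross-term; one must be careful with signs and factors of $(1-\lambda_i)$ so that the two combine to produce a single $-(1-\lambda_i)^2\mathbb{V}[\bar{p}_{i-1}]$ contribution rather than a messier residual. Beyond that, the derivation is bookkeeping, and the edge case $i=1$ is handled automatically since $\lambda_1=1$ makes the $\mathbb{V}[\bar{p}_0]$ coefficient vanish.
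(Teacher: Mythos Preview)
Your proposal is correct and follows essentially the same route as the paper: decompose $\mathbb{V}[\hat{p}_n]$ using Lemma~\ref{lemma:affine_covariance} to eliminate cross terms, then expand each $\mathbb{V}[\hat{r}_i]$ via the linear-combination formula and plug in Lemmas~\ref{lemma:variance_of_single_rating} and~\ref{lemma:covariance_of_sample_mean_and_next_rating} so that the $(1-\lambda_i)^2\mathbb{V}[\bar{p}_{i-1}]$ and $-2(1-\lambda_i)\mathbb{C}\text{ov}[r_i,\bar{p}_{i-1}]$ contributions collapse to $-(1-\lambda_i)^2\mathbb{V}[\bar{p}_{i-1}]$. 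Your explicit remark about the sign cancellation and the $i=1$ edge case is a nice addition but the argument is otherwise identical.
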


\begin{proof}
	By Lemmas~\ref{lemma:affine_covariance},~\ref{lemma:covariance_of_sample_mean_and_next_rating} and ~\ref{lemma:variance_of_single_rating}:
	\begin{align}
			&\mathbb{V}\mleft[\hat{p}_n\mright]
			= \dfrac{1}{\mleft(\sum_{i=1}^n \omega_i\mright)^2} \mleft(\sum_{i=1}^n \omega_i^2\mathbb{V}\mleft[\hat{r}_{i}\mright] + 2\sum_{i=1}^{n-1}\sum_{j=i+1}^n \omega_i\omega_j\mathbb{C}\text{ov}\mleft[\hat{r}_{i}, \hat{r}_{j}\mright]\mright)
			\label{eq:affine_variance} \nonumber\\
			&= \dfrac{1}{\mleft(\sum_{i=1}^n \omega_i\mright)^2} \sum_{i=1}^n \dfrac{\omega_i^2}{\lambda_i^2}\big(\mathbb{V} \mleft[r_i\mright] + \mleft(1-\lambda_i\mright)^2 \mathbb{V}\mleft[\bar{p}_{i-1}\mright] - 2\mleft(1-\lambda_i\mright) \nonumber\\
			&\cdot\mathbb{C}\text{ov}\mleft[r_i, \bar{p}_{i-1}\mright]\big)
			= \mleft(\sum_{i=1}^n \omega_i\mright)^{-2} \hspace{-1.5mm}\sum_{i=1}^n \dfrac{\omega_i^2}{\lambda_i^2} \mleft(p-p^2 - \mleft(1-\lambda_i\mright)^2 \mathbb{V}\mleft[\bar{p}_{i-1}\mright]\mright).\qedhere
	\end{align}
\end{proof}	
	
\begin{theorem}[Convergence of affine estimator]
	\label{theorem:affine_convergence}
	When $\forall\hat{\lambda}_i=\lambda_i$, $\hat{p}_n=\frac{\sum_{i=1}^n \omega_i \hat{r}_i}{\sum_{i=1}^n \omega_i}$ converges within $\epsilon$ of the true value $p$ with probability of at least $1-\alpha$ if $\mleft(\sum_{i=1}^n \omega_i\mright)^2 \geq \frac{1}{2\epsilon^2}\log{\frac{2}{\alpha}}\sum_{i=1}^n \big(\frac{\omega_i}{\lambda_i}\big)^2$.

\end{theorem}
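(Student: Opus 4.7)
The plan is to view the weighted deviations $\omega_i(\hat{r}_i - p)$ as a martingale difference sequence and apply Azuma--Hoeffding's inequality in its range-based form.

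First, I would argue the martingale property. Since $\hat{\lambda}_i = \lambda_i$, Theorem~\ref{theorem:affine_single_conditional_bias} implies $\mathbb{E}[\hat{r}_i \mid r_1, \dots, r_{i-1}] = p$ (as $\bar{p}_{i-1}$ is a deterministic function of past ratings, and conditioning on $\bar{p}_{i-1}$ is at least as informative as any coarser $\hat{r}_j$ in that argument). Consequently $Y_i := \omega_i(\hat{r}_i - p)$ is a martingale difference sequence with respect to the natural filtration generated by $r_1, r_2, \dots$

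Second, I would derive a tight almost-sure range bound for each $Y_i$ given the past. Conditioned on $r_1, \dots, r_{i-1}$, the value of $\bar{p}_{i-1}$ is fixed, and $\hat{r}_i = (r_i - (1-\lambda_i)\bar{p}_{i-1})/\lambda_i$ takes only two values: a minimum $-(1-\lambda_i)\bar{p}_{i-1}/\lambda_i$ when $r_i=0$ and a maximum $(1 - (1-\lambda_i)\bar{p}_{i-1})/\lambda_i$ when $r_i=1$. The conditional range is therefore exactly $1/\lambda_i$, independent of the history, so $Y_i$ has conditional range $\omega_i/\lambda_i$. This is the key quantitative step, and getting the right constant here (rather than the looser worst-case bound $|\hat{r}_i - p| \leq 1/\lambda_i$, which would blow up the final bound by a factor of $4$) is the main thing to be careful about.

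Third, I would invoke the range-form Azuma--Hoeffding inequality: for a martingale difference sequence with conditional range $c_i$,
\begin{equation*}
\text{Pr}\mleft(\mleft|\sum_{i=1}^n Y_i\mright| \geq t\mright) \leq 2\exp\mleft(-\dfrac{2t^2}{\sum_{i=1}^n c_i^2}\mright).
\end{equation*}
Plugging in $c_i = \omega_i/\lambda_i$ and applying this to $|\hat{p}_n - p| = |\sum Y_i|/\sum \omega_i$ with the choice $t = \epsilon \sum_{i=1}^n \omega_i$ gives
\begin{equation*}
\text{Pr}\mleft(|\hat{p}_n - p| \geq \epsilon\mright) \leq 2\exp\mleft(-\dfrac{2\epsilon^2 \mleft(\sum_{i=1}^n \omega_i\mright)^2}{\sum_{i=1}^n (\omega_i/\lambda_i)^2}\mright).
\end{equation*}

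Finally, requiring the right-hand side to be at most $\alpha$ and solving for $(\sum_{i=1}^n \omega_i)^2$ yields exactly the stated condition $(\sum_{i=1}^n \omega_i)^2 \geq \frac{1}{2\epsilon^2}\log(2/\alpha) \sum_{i=1}^n (\omega_i/\lambda_i)^2$. No further work is needed since the martingale machinery does not require independence, only the conditional mean-zero and bounded-range properties that I have already established.
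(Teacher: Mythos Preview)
Your proposal is correct and matches the paper's own proof essentially step for step: both construct the martingale (difference) sequence $\omega_i(\hat{r}_i-p)$, use Theorem~\ref{theorem:affine_single_conditional_bias} for the conditional mean-zero property, compute the exact conditional range $\omega_i/\lambda_i$ from the two values of $r_i$, and then apply the range-form Azuma--Hoeffding inequality with $t=\epsilon\sum_i\omega_i$ before rearranging. Your write-up is, if anything, slightly more explicit about the range computation than the paper's.
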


\begin{proof}
Define $X_n$ as $\sum_{i=1}^n \omega_i \mleft(\hat{r}_{i}-p\mright)$, $X_0=0$. Then $X_n$ is a martingale as $\mathbb{E}\mleft[X_n \mid X_0,X_1,...,X_{n-1}\mright]=X_{n-1}+\mathbb{E}\mleft[w_i\mleft(\hat{r}_i-p\mright)\mid\bar{p}_{n-1}\mright]=X_{n-1}$. For each $n: X_n-X_{n-1}=\omega_n \mleft(\hat{r}_{n}-p\mright)$, whose values for $r_n=0$ and $r_n=1$ are separated by $d_n=\frac{\omega_n}{\lambda_n}$. By Azuma's inequality:
\begin{equation}
	\begin{split}
		\text{Pr}\mleft(\mleft|\dfrac{\sum_{i=1}^n \omega_i \mleft(\hat{r}_{i}-p\mright)}{\sum_{i=1}^n \omega_i}\mright| > \epsilon \mright) 
		&=\text{Pr}\bigg(|X_n-X_0| > \epsilon \sum_{i=1}^n \omega_i\bigg) \\
		&\leq 2\exp\mleft(\dfrac{-2\epsilon^2\mleft(\sum_{i=1}^n \omega_i\mright)^2}{\sum_{i=1}^n d_i^2}\mright).
	\end{split}
\end{equation}
The proof is completed by taking the logarithm of both sides and redistributing the terms.
\end{proof}

\begin{corollary}[Convergence of $\bar{p}_n$ for specific $\omega_i$]
	\label{corollary:affine_convergence}
	By Theorem \ref{theorem:affine_convergence}, when $\forall{}i:\hat{\lambda}_i=\lambda_i$, $\omega_i=1$, $\hat{p}_n$ is a consistent estimator of $p$ if $\sum_{i=1}^n\frac{1}{\lambda_i^2}$ grows slower than $n^2$. Similarly, when $\forall\hat{\lambda}_i=\lambda_i=\omega_i$, $\hat{p}_n$ is a consistent estimator of $p$ if $\mleft(\sum_{i=1}^n\lambda_i\mright)^2$ grows quicker than $n$.
\end{corollary}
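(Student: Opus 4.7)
The plan is to apply Theorem~\ref{theorem:affine_convergence} directly and reduce the given growth conditions to the requirement that the exponential tail bound there tends to $0$ as $n\to\infty$. Starting from the Azuma-style bound used inside Theorem~\ref{theorem:affine_convergence},
\begin{equation*}
\text{Pr}\bigl(|\hat{p}_n - p| > \epsilon\bigr) \leq 2\exp\Bigl(-\tfrac{2\epsilon^2(\sum_{i=1}^n \omega_i)^2}{\sum_{i=1}^n (\omega_i/\lambda_i)^2}\Bigr),
\end{equation*}
I would note that consistency of $\hat{p}_n$ (in the sense of convergence in probability to $p$) is equivalent to the right-hand side tending to $0$ for every fixed $\epsilon>0$, which in turn is equivalent to the ratio $R_n := (\sum_{i=1}^n \omega_i)^2 / \sum_{i=1}^n (\omega_i/\lambda_i)^2$ diverging as $n\to\infty$.

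Next I would specialize this ratio to the two weight choices in the corollary. For $\omega_i = 1$, the numerator is $n^2$ and the denominator is $\sum_{i=1}^n \lambda_i^{-2}$, so $R_n = n^2 / \sum_{i=1}^n \lambda_i^{-2}$; this diverges precisely when $\sum_{i=1}^n \lambda_i^{-2}$ grows asymptotically slower than $n^2$, yielding the first claim. For $\omega_i = \lambda_i$, the denominator collapses to $\sum_{i=1}^n 1 = n$, giving $R_n = (\sum_{i=1}^n \lambda_i)^2 / n$; divergence here is equivalent to $(\sum_{i=1}^n \lambda_i)^2$ growing asymptotically faster than $n$, yielding the second claim.

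Since both derivations are immediate substitutions followed by comparing a numerator and denominator asymptotically, there is no real obstacle: the only subtlety worth flagging is the equivalence between "the exponential tail goes to $0$ for every $\epsilon>0$" and the divergence of $R_n$, which follows because the $\epsilon^2$ factor in the exponent is a positive constant and thus does not affect whether the exponent tends to $-\infty$. Invoking Theorem~\ref{theorem:affine_convergence} with arbitrarily small $\alpha$ for each fixed $\epsilon$ then completes the proof.
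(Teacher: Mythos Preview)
Your proposal is correct and matches the paper's approach exactly: the paper gives no separate proof for this corollary, treating it as an immediate specialization of Theorem~\ref{theorem:affine_convergence} obtained by substituting $\omega_i=1$ and $\omega_i=\lambda_i$ into the ratio $(\sum_i\omega_i)^2/\sum_i(\omega_i/\lambda_i)^2$. One small wording nit: the tail bound tending to $0$ is \emph{sufficient} for consistency, not equivalent to it, but since the corollary only asserts ``if'' conditions this does not affect your argument.
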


\balance
\bibliographystyle{ACM-Reference-Format}
\bibliography{references}

\end{document}